\theoremstyle{plain}
\newtheorem{thm}{Theorem}
\newtheorem{lemm}{Lemma}
\theoremstyle{definition}
\newtheorem{defi}{Definition}
\theoremstyle{remark}
\newcommand{\cH}{\mathcal{H}}
\newcommand{\TrH}[1]{\mathbf{T}(\mathcal{H}_{#1})}
\newcommand{\DeH}[1]{\mathbf{D}(\mathcal{H}_{#1})}
\newcommand{\BH}[1]{\mathbf{B}(\mathcal{H}_{#1})}
\newcommand{\dom}{\mathrm{dom}}
\newcommand{\tr}{\mathrm{Tr}}
\newcommand{\realn}{\mathbb{R}}
\newcommand{\cmplx}{\mathbb{C}}
\newcommand\calM{{\cal M}}
\newcommand\calP{{\cal P}}
\newcommand\calU{{\cal U}}
\newcommand{\unit}{\mathbbm{1}}
\newcommand{\Uspsp}{U}
\newcommand{\Usp}{U_{SP}}
\newcommand{\Uspo}{U_{SP}^{(1)}}
\newcommand{\Uspt}{U_{SP}^{(2)}}
\newcommand{\Uspod}{U_{SP}^{(1)\dag}}
\newcommand{\Usptd}{U_{SP}^{(2)\dag}}
\newcommand{\sfE}{\mathsf{E}}
\newcommand{\sfF}{\mathsf{F}}
\newcommand{\Spr}{{S^\prime}}
\newcommand{\Ppr}{{P^\prime}}
\newcommand{\gin}[1]{( #1 )_{g\in G}}
\newcommand{\Uone}{\mathrm{U}(1)}
\newcommand{\rmin}{\mathrm{in}}
\newcommand{\rmout}{\mathrm{out}}
\newcommand{\Ain}{A_{\mathrm{in}}}
\newcommand{\Aout}{A_{\mathrm{out}}}
\newcommand{\Bin}{B_{\mathrm{in}}}
\newcommand{\Bout}{B_{\mathrm{out}}}
\begin{document}


\title{Wigner-Araki-Yanase theorem for continuous and unbounded conserved observables}


\author{Yui Kuramochi}
\email[]{kuramochi.yui@phys.kyushu-u.ac.jp}
\affiliation{Department of Physics, Kyushu University, 744 Motooka, Nishi-ku, Fukuoka, Japan}
\author{Hiroyasu Tajima}
\affiliation{Graduate School of Informatics and Engineering,
The University of Electro-Communications,
1-5-1 Chofugaoka, Chofu, Tokyo 182-8585, Japan}


\date{\today}

\begin{abstract}
The Wigner-Araki-Yanase (WAY) theorem states that additive conservation laws imply the commutativity of exactly implementable projective measurements and the conserved observables of the system.
Known proofs of this theorem are only restricted to bounded or discrete-spectrum conserved observables of the system and are not applicable to unbounded and continuous observables like a momentum operator.
In this Letter, we present the WAY theorem for possibly unbounded and continuous conserved observables under the Yanase condition, which requires that the probe positive operator-valued measure should commute with the conserved observable of the probe system.
As a result of this WAY theorem, we show that exact implementations of the projective measurement of the position under momentum conservation and of the quadrature amplitude using linear optical instruments and photon counters are impossible.
We also consider implementations of unitary channels under conservation laws and find that the conserved observable $L_S$ of the system commute with the implemented unitary $U_S$ if $L_S$ is semi-bounded, while $U_S^\dagger L_S U_S$ can shift up to possibly non-zero constant factor if the spectrum of $L_S$ is upper and lower unbounded. 
We give simple examples of the latter case, where $L_S$ is a momentum operator.
\end{abstract}


\maketitle

\textit{\textbf{Introduction.}}---
One of the fundamental findings of quantum measurement theory \cite{busch2016quantum} is the fact that the physical conservation laws restrict our ability to implement measurements.
By considering specific examples of spin measurements, Wigner~\cite{wigner1952measurement} found that additive conservation law prohibits the \textcolor{black}{projective} and repeatable measurements of an observable that does not commute with the conserved one.
He also discussed that an approximate measurement is possible if a probe state has a large coherence in the conserved quantity.
Later, Araki and Yanase~\cite{PhysRev.120.622,PhysRev.123.666} generalized Wigner's result to arbitrary repeatable projective measurements and bounded conserved observables.
The former no-go result is now called the Wigner-Araki-Yanase (WAY) theorem.

From these pioneering works by Wigner, Araki, and Yanase, many results have been published that sophisticate the WAY theorem and extend it to various directions.
The first and exciting direction is to extend the WAY theorem to a quantitative form.
Since the original WAY theorem was a qualitative theorem, many researchers, including Yanase and Ozawa, extended it to provide necessary conditions for an approximate implementation of desired measurements \cite{PhysRev.123.666,OzawaWAY,korzekwa2013,TN}.
By imposing the Yanase condition, which requires that the probe observable of the measurement model should commute with the conserved observable, it became clear that the size of the measurement device \cite{PhysRev.123.666}, the variance \cite{OzawaWAY} and quantum fluctuations \cite{korzekwa2013,TN} of the conserved quantities must be inversely proportional to the error in implementing the desired measurement.
The second direction is extending the WAY theorem to general quantum information processings beyond quantum measurements. This extension was first made as a restriction on the implementation of C-NOT gates \cite{ozawaWAY_CNOT}, extended to various limited unitary gates \cite{ozawa2003uncertainty,PhysRevA.75.032324,Karasawa_2009}, and then it was shown that for an arbitrary unitary gate \cite{TSS,TSS2}, the same restriction is given as in measurements.
This direction has been further deepened in recent years, and now extended versions of the WAY theorem are given for various objects, such as error-correcting codes \cite{TS,arxiv.2206.11086},  thermodynamic processes \cite{arxiv.2206.11086} and the toy model of black holes \cite{TS,arxiv.2206.11086}.

\begin{figure}[tb]
		\centering
		\includegraphics[width=.4\textwidth]{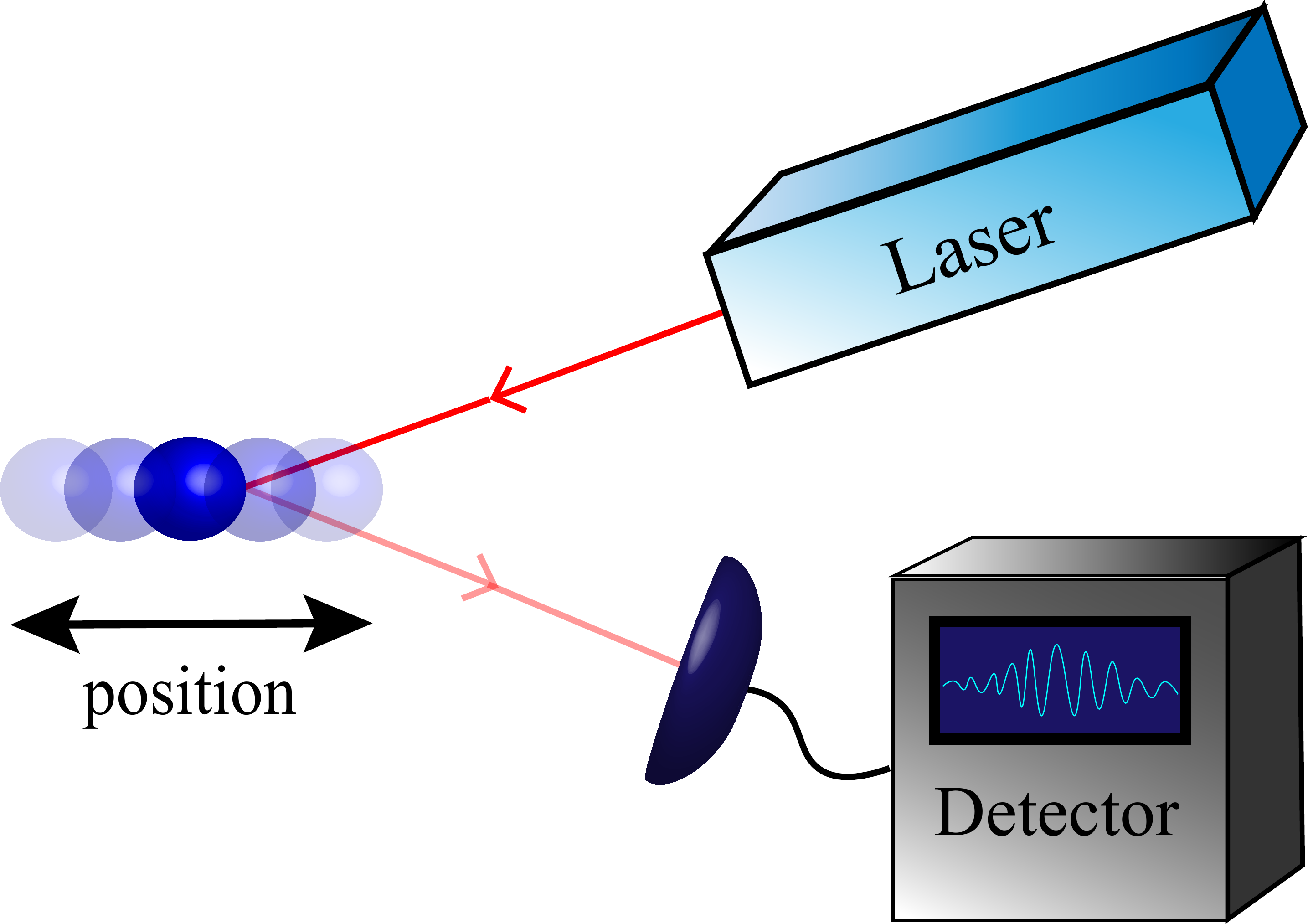}
		\caption{
		It is well known that position and momentum cannot be measured without error \textit{simultaneously}. However, in  natural settings like the above schematic, position measurements will be performed under the momentum conservation law.
		If the WAY theorem is correct for unbounded operators, the position itself cannot be measured without error under such natural settings.
}
		\label{general_image} 
	\end{figure}

Most of the existing WAY-type results are, however, restricted to bounded conserved observables and not applicable to physically important examples in which unbounded conserved observables are common.
This problem is crucial, since if the WAY theorem is correct for unbounded operators, the position measurement without error is impossible under the momentum conservation law (see Fig.~\ref{general_image}).

Because of its importance, there has been active research on extending the WAY theorem to unbounded systems. 
However, despite previous important progress \textcolor{black}{\cite{steinshimony1971,doi:10.1063/1.525627,PhysRevLett.67.1956,ozawa1993way,PhysRevLett.106.110406,Loveridge_2020}}, this problem remains unsolved.
The extensions proved in \cite{steinshimony1971,doi:10.1063/1.525627} require some technical additional conditions, which do not hold for the position measurement under the momentum conservation.
There are also detailed accounts of the position measurements under the momentum conservation \cite{PhysRevLett.67.1956,PhysRevLett.106.110406}.
Particularly in Ref. \cite{PhysRevLett.106.110406}, a trade-off relation is obtained for the accuracy of the position measurement and the momentum coherence of the probe under the momentum conservation.
However, in the derivation of the trade-off, as pointed out in \cite{Loveridge_2020}, issues related to domains of unbounded operators are ignored.
Moreover, even if the trade-off relation was valid, it would not imply the impossibility of the exact implementation of the position measurement under momentum conservation because we can prepare a probe state with a divergent momentum coherence and the trade-off relation gives the trivial inequality $0\geq 0$ in this case.

Here we give a positive answer to this question: we present the WAY theorem for general unbounded and continuous conserved observables under the Yanase condition, which is a basic condition introduced in Refs. \cite{PhysRev.123.666,OzawaWAY} and used in \cite{PhysRevLett.106.110406}.
We also consider the unitary channel implementation, and show that a similar theorem holds in that case. For the unitary channel implementation without error under the conservation law, we show that the implemented unitary $U_S$ and the conserved observable $L_S$ must commute, except for a very limited scenario that $L_S$ is upper and lower unbounded and the change of $L_S$ by $U_S$ is a constant shift: $U^\dagger_S L_S U_S=L_S+\gamma \unit $~\footnote{}.

\textit{\textbf{Notation and definitions.}}---
In this letter, the Hilbert space of a quantum system $S$ is denoted by $\cH_S$, which may be finite or infinite-dimensional.
The unit \textcolor{black}{operator} and sets of bounded and trace-class \textcolor{black}{operators} on a Hilbert space $\cH_S$ are respectively denoted by $\unit_S, \, \BH{S}$, and $\TrH{S}$.
A non-negative operator $\rho_S \in \TrH{S}$ with a unit trace is called a density operator, which corresponds to a quantum state.
The set of density operators on $\cH_S$ is denoted by $\DeH{S}$.

For a linear map $ \Lambda \colon \TrH{A} \to \TrH{B}$ that is bounded with respect to the trace-norms on $ \TrH{A}$ and $\TrH{B}$, the adjoint map $\Lambda^\dag \colon \BH{B} \to \BH{A}$ is well-defined by $\tr[\Lambda(\rho_A) b]  = \tr[\rho_A \Lambda^\dag (b)] \, (\rho_A \in \TrH{A}; \, b\in \BH{B})$, where $\tr [\cdot]$ denotes the trace.
A linear map $\Lambda \colon \TrH{A} \to \TrH{B}$ is called a \textit{quantum channel} if $\Lambda$ is trace-preserving and \textcolor{black}{$\Lambda^\dag$} is completely positive (CP)~\cite{1955stinespring,paulsen_2003}.
The \textcolor{black}{map} $\Lambda^\dag$ represents the channel in the Heisenberg picture.

A triple $(\Omega, \Sigma, \sfE_S)$ is called a \textit{positive operator-valued measure} (POVM) \cite{davies1976quantum,holevo2001statistical} on $\cH_S$ if $\Sigma$ is a $\sigma$-algebra on the set $\Omega$ and $\sfE_S \colon \Sigma \to \BH{S}$ satisfies (i) $\sfE_S(X) \geq 0 \, (X\in \Sigma)$, 
(ii) $\sfE_S(\varnothing)=0, \, \sfE_S (\Omega) = \unit_S$, 
and (iii) $  \sfE_S(\cup_k X_k) = \sum_k \sfE(X_k)$ in the weak operator topology~\cite{reed1972methods,prugovecki1982quantum} for any disjoint sequence \textcolor{black}{$(X_k) \subseteq \Sigma$.}
A POVM $(\Omega, \Sigma, \sfE_S)$ is called a projection-valued measure (PVM) if each $\sfE_S(X) \, (X\in \Sigma)$ is a projection.
A POVM $(\Omega, \Sigma, \sfE_S)$ on $\cH_S$ describes the outcome statistics of a general measurement process so that the outcome probability measure when the state is prepared in $\rho_S \in \DeH{S}$ is given by $\Sigma \ni X \mapsto \tr [\rho_S \sfE_S(X)]$.

We now consider the implementation of a quantum channel $ \Psi \colon \TrH{S} \to \TrH{\Spr}$ by a system-environment model.
Here a tuple $ (\cH_P, \cH_{\Spr} , \cH_{P^\prime} , \rho_P , \Uspsp)$ is called a \textit{system-environment model} if $P, \, \Spr$, and $\Ppr$ are quantum systems, $\rho_P \in \DeH{P}$ is a density operator on $P$ called the probe state, and $\Uspsp \colon \cH_S \otimes \cH_P \to \cH_\Spr \otimes \cH_\Ppr$ is a unitary operator.
The system-environment model $ (\cH_P, \cH_\Spr, \cH_{P^\prime} , \rho_P , \Uspsp)$ is said to \textit{implement} $\Psi $ if
\begin{equation}
	\Psi (\rho_S) = \tr_\Ppr [\Uspsp (\rho_S \otimes \rho_P) \Uspsp^\dag]
	\quad (\rho_S \in \DeH{S}),
	\label{eq:implementc}
\end{equation} 
where $\tr_A[\cdot]$ denotes the partial trace over a system $A$ and the dagger denotes the \textcolor{black}{adjoint.}
The condition \eqref{eq:implementc} says that the channel $\Psi$ is realized if we first prepare the system $S$ in an arbitrary state $\rho_S$ and $P$ in the fixed probe state $\rho_P$, then they interact according to the unitary $\Uspsp$, and finally discard the output probe system $\Ppr$ (see Fig.~\ref{fig:1}).

\begin{figure} 
\centering
\includegraphics[width=5.5cm,clip]{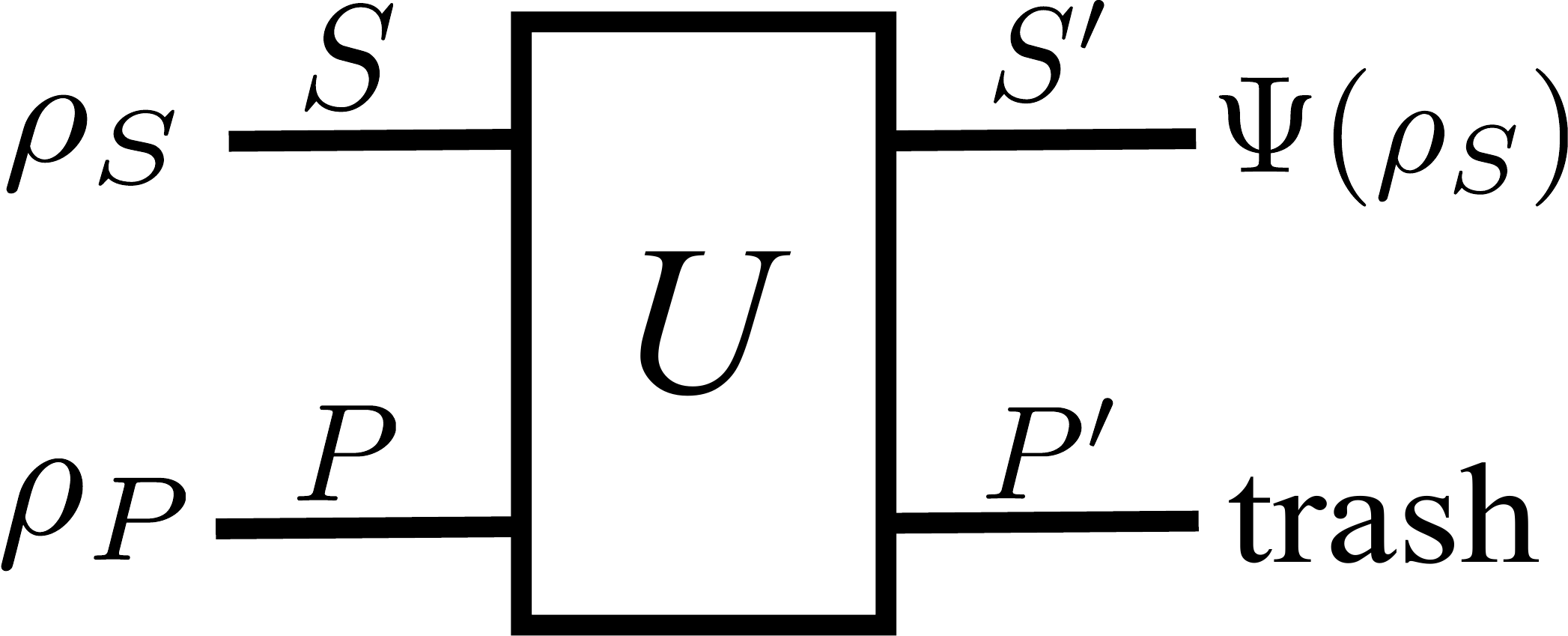}
\caption{A system-environment model that implements a channel $\Psi$.}
\label{fig:1}
\end{figure}

\begin{figure} 
\centering
\includegraphics[width=5.5cm,clip]{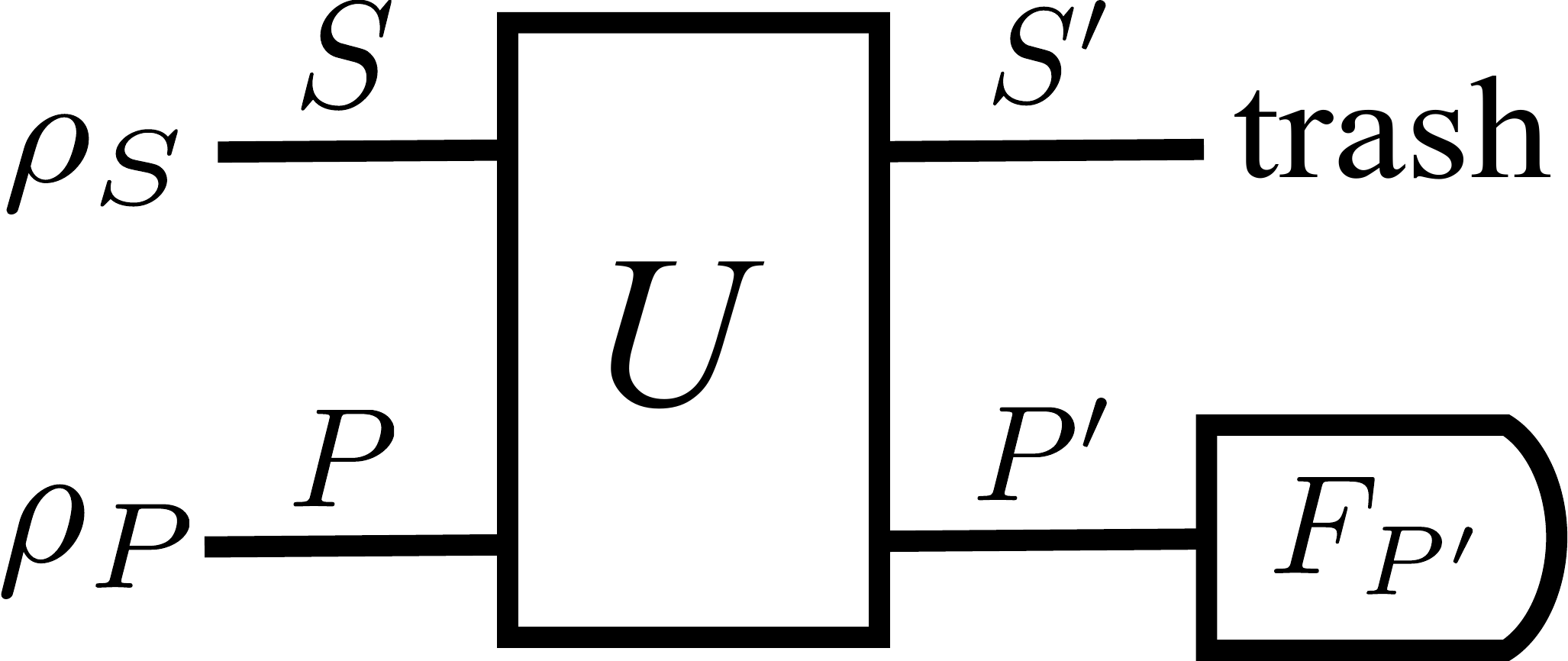}
\caption{A measurement model that implements a POVM $\sfE_S$.}
\label{fig:2}
\end{figure}

The implementation of a measurement by a measurement model is defined in a similar way as follows.
A tuple $\mathbb{M}= ( \cH_P, \cH_{S^\prime} , \cH_{P^\prime} , \rho_P , \Uspsp , (\Omega, \Sigma ,\sfF_\Ppr))$ is called a \textit{measurement model} if $ ( \cH_P, \cH_{S^\prime} , \cH_{P^\prime} , \rho_P , \Uspsp)$ is a system-environment model and $(\Omega, \Sigma ,\sfF_\Ppr)$ is a POVM on $\cH_\Ppr$.
The measurement model $\mathbb{M}$ is said to \textit{implement} a POVM $(\Omega , \Sigma , \sfE_S)$ on $\cH_S$ if 
\begin{equation}
	\sfE_S(X) = \tr_P [(\unit_S \otimes \rho_P) \Uspsp^\dag (\unit_\Spr \otimes \sfF_\Ppr (X)) \Uspsp]
\end{equation}
for all $X\in \Sigma$ (see Fig.~\ref{fig:2}).

Let $L_A$ be a possibly unbounded self-adjoint operator \cite{reed1972methods,prugovecki1982quantum} on a Hilbert space $\cH_A$
and let $\dom (L_A) \subseteq \cH_A$ denote the domain of $L_A$.
A bounded operator $a \in \BH{A}$ is said to \textit{commute} with $L_A$ if $a$ commutes with the spectral measure \cite{reed1972methods,prugovecki1982quantum} of $L_A$.
If $U_A \in \BH{A}$ is \textcolor{black}{unitary,} then $U_A$ commutes with $L_A$ if and only if $L_A = U_A^\dag L_A U_A$, where the domain of the self-adjoint operator $U_A^\dag L_A U_A$ \textcolor{black}{is} $U_A^\dag \dom(L_A)$.

The \textit{spectrum} $\sigma (L_A)$ of a self-adjoint operator $L_A$ on $\cH_A$ is the set of $\lambda \in \cmplx$ such that the operator $L_A - \lambda \unit_A$ has no bounded \textcolor{black}{inverse.}
\textcolor{black}{The spectrum $\sigma(L_A)$} is a closed subset of the reals $\realn$ and, if $\cH_A$ is finite-dimensional, \textcolor{black}{coincides} with the set of the eigenvalues of $L_A$.
A self-adjoint operator $L_A$ is said to be \textit{semi-bounded} (respectively, \textit{unbounded}) if $\sigma (L_A)$ is an upper or lower bounded (respectively, unbounded) subset of $\realn$.
For example, the quantum harmonic oscillator Hamiltonian is \textcolor{black}{unbounded but still semi-bounded.}

\textit{\textbf{Main results.}}---
Now we state the main results of this letter:
\begin{thm}[WAY theorem for projective measurements] \label{thm:main1}
Let 
$
( \cH_P, \cH_{S^\prime} , \cH_{P^\prime} , \rho_P,  \Uspsp , (\Omega, \Sigma, \sfF_\Ppr))
$
be a measurement model that implements a PVM $(\Omega ,\Sigma, \sfE_S)$ on $\cH_S$. Suppose that there are (possibly unbounded) self-adjoint operators $L_S,\, L_P, \, L_{S^\prime}$, and $L_{P^\prime}$ that act respectively on $\cH_S$, $\cH_P$, $\cH_{S^\prime}$, and $\cH_{P^\prime}$ and satisfy the conservation law
\begin{equation}
	\Uspsp^\dag L_{\Spr \Ppr}  \Uspsp = L_{SP},
	\label{eq:cons}
\end{equation}
where $L_{SP} := L_S \otimes \unit_P + \unit_S \otimes L_P $ and $L_{\Spr \Ppr} := L_{S^\prime} \otimes \unit_{P^\prime} + \unit_{S^\prime} \otimes L_\Ppr$.
We also assume the Yanase condition that $\sfF_\Ppr (X)$ commutes with $L_\Ppr$ for every $X \in \Sigma$.
Then $\sfE_S(X)$ commutes with $L_S$ for every $X\in \Sigma$.
\end{thm}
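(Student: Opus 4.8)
The plan is to reduce everything to bounded operators by working with the unitary groups $e^{isL_S}$ and $e^{isL_P}$ generated by the conserved observables, thereby sidestepping all domain questions about the unbounded generators. Write $G_{SP}(X) := \Uspsp^\dag(\unit_\Spr \otimes \sfF_\Ppr(X))\Uspsp$, a bounded positive contraction on $\cH_S\otimes\cH_P$ with $\sfE_S(X) = \tr_P[(\unit_S\otimes\rho_P)G_{SP}(X)]$. First I would convert the hypotheses into a single commutation statement. The Yanase condition means $\sfF_\Ppr(X)$ commutes with every spectral projection of $L_\Ppr$, hence with the group $e^{isL_\Ppr}$; since $\unit_\Spr\otimes\sfF_\Ppr(X)$ acts trivially on the first factor, it therefore commutes with $e^{isL_{\Spr\Ppr}} = e^{isL_\Spr}\otimes e^{isL_\Ppr}$. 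Conjugating by $\Uspsp$ and using that the conservation law \eqref{eq:cons} gives $\Uspsp^\dag e^{isL_{\Spr\Ppr}}\Uspsp = e^{isL_{SP}}$ by functional calculus for unitarily equivalent self-adjoint operators, I obtain $[G_{SP}(X),\, e^{isL_S}\otimes e^{isL_P}] = 0$ for all $s\in\realn$.

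The projectivity of $\sfE_S$ enters next, and this is the step I expect to carry the real content. Let $\mathcal{S}_P := \overline{\mathrm{Ran}(\rho_P)}$ be the support of the probe state. For a unit vector $\psi$ in the range of the projection $\sfE_S(X)$, the identity $\tr[(|\psi\rangle\langle\psi|\otimes\rho_P)G_{SP}(X)] = \langle\psi|\sfE_S(X)|\psi\rangle = 1$ together with $0\le G_{SP}(X)\le\unit$ forces $G_{SP}(X)$ to act as the identity on $\{\psi\}\otimes\mathcal{S}_P$, since a state whose expectation on an effect equals one must be supported where that effect equals the identity; the complementary computation $\langle\psi|\sfE_S(X)|\psi\rangle=0$ for $\psi$ in the kernel forces $G_{SP}(X)$ to vanish on that sector. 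Decomposing $\cH_S$ into $\mathrm{Ran}\,\sfE_S(X)$ and its orthogonal complement then yields the clean operator identity $G_{SP}(X)(\psi\otimes\phi) = (\sfE_S(X)\psi)\otimes\phi$ for every $\psi\in\cH_S$ and every $\phi\in\mathcal{S}_P$.

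Finally I would combine the two facts. Fix a unit vector $\phi\in\mathcal{S}_P$ and arbitrary $\psi,\psi'\in\cH_S$, and evaluate $\langle\psi'\otimes\phi\,|\,G_{SP}(X)\,e^{isL_{SP}}(\psi\otimes\phi)\rangle$ in two ways: moving $G_{SP}(X)$ onto the left vector $\psi'\otimes\phi\in\cH_S\otimes\mathcal{S}_P$ gives $\langle\psi'|\sfE_S(X)e^{isL_S}\psi\rangle\,\langle\phi|e^{isL_P}\phi\rangle$, while using the Step-1 commutation followed by the support identity gives $\langle\psi'|e^{isL_S}\sfE_S(X)\psi\rangle\,\langle\phi|e^{isL_P}\phi\rangle$. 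The crucial point, and what makes the argument survive for continuous and unbounded $L_P$, is that I never require $e^{isL_P}\phi$ to remain in $\mathcal{S}_P$: holding the probe vector fixed on both sides leaves the common scalar factor $\langle\phi|e^{isL_P}\phi\rangle$, which by strong continuity is nonzero for all sufficiently small $s$. Cancelling it gives $[\sfE_S(X),\, e^{isL_S}]=0$ for small $s$, and the one-parameter group law promotes this to all $s\in\realn$; by Stone's theorem $\sfE_S(X)$ then commutes with every spectral projection of $L_S$, which is the assertion. The main obstacle is precisely this middle step: extracting the exact operator identity on the probe support from the scalar projectivity conditions, and recognising that one must keep $\phi$ fixed rather than try to propagate it through the probe dynamics, whose support is not $e^{isL_P}$-invariant in general.
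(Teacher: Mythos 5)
Your proof is correct, and it reaches the result by a genuinely different route from the paper's. The paper packages the model into the Heisenberg-picture unital CP map $\Lambda(a) := \tr_P[(\unit_S\otimes\rho_P)\Uspsp^\dag a \Uspsp]$ and runs the argument through multiplicative-domain machinery: since $\sfE_S(X)=\Lambda(\unit_\Spr\otimes\sfF_\Ppr(X))$ is a projection, the Schwarz inequality for unital CP maps forces $\unit_\Spr\otimes\sfF_\Ppr(X)$ into the multiplicative domain $\mathcal{M}_\Lambda$ (Choi/Paulsen); then the computation $\Lambda(e^{itL_\Spr}\otimes e^{itL_\Ppr})=\tr[\rho_P e^{itL_P}]\,e^{itL_S}$, multiplicativity, and the Yanase condition give $\tr[\rho_P e^{itL_P}]\,(\sfE_S(X)e^{itL_S}-e^{itL_S}\sfE_S(X))=0$; the scalar is nonzero near $t=0$ by strong continuity, an open-subgroup argument for connected groups (here $\realn$) extends commutation to all $t$, and Reed--Simon Theorem VIII.13 concludes. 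Your support lemma is an elementary, concrete substitute for the multiplicative-domain step: positivity of $G_{SP}(X)$ together with the trace conditions $\tr[(\ket{\psi}\bra{\psi}\otimes\rho_P)G_{SP}(X)]=1$ (resp.\ $0$) for $\psi$ in the range (resp.\ kernel) of the projection yields the exact operator identity $G_{SP}(X)(\psi\otimes\phi)=(\sfE_S(X)\psi)\otimes\phi$ for $\phi$ in the probe support, and your fixed-vector amplitude $\braket{\phi|e^{isL_P}\phi}$ then plays precisely the role of the paper's scalar $\tr[\rho_P e^{itL_P}]$; your group-law promotion of small-$s$ commutation to all $s$ is the abelian-$\realn$ instance of the paper's open-subgroup argument. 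Structurally the two proofs are parallel, but yours is self-contained (no Choi theorem, no Kadison--Schwarz inequality) and makes the mechanism transparent, including the correctly identified subtlety that the probe support need not be invariant under $e^{isL_P}$; what the paper's abstraction buys is reusability --- the same $\mathcal{M}_\Lambda$ lemma is applied verbatim to prove the unitary-channel theorem (Theorem 2), and the argument transfers without change to symmetries of arbitrary connected topological groups --- although your argument would also generalize to that setting, with the open-subgroup argument replacing the $\realn$ group law.
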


\begin{thm}[WAY theorem for unitary channels] \label{thm:main2}
Let $U_{S\to \Spr} \colon \cH_S \to \cH_\Spr$ be a unitary operator between Hilbert spaces $\cH_S$ and $\cH_\Spr$, let $\mathcal{U}_{S\to \Spr} \colon \TrH{S} \to \TrH{\Spr}$ be the unitary channel defined by $\mathcal{U}_{S\to \Spr} (\rho_S) := U_{S\to \Spr} \rho_S U_{S\to \Spr}^\dag$ $(\rho_S \in \TrH{S})$, and let $(\cH_S, \cH_P, \cH_{S^\prime} , \cH_{P^\prime} , \rho_P , \Uspsp)$ be a system-environment model that implements $\mathcal{U}_{S\to \Spr}$.
Suppose that there are (possibly unbounded) self-adjoint operators $L_S, \, L_P, \, L_{S^\prime}$, and $L_{P^\prime}$ that act respectively on $\cH_S, \, \cH_P, \, \cH_{S^\prime}$, and $\cH_{P^\prime}$ and satisfy the conservation law~\eqref{eq:cons}.
Then there exists a real number $\gamma \in \realn$ such that 
\begin{equation}
	U_{S\to \Spr}^\dag L_\Spr U_{S\to \Spr} = L_S + \gamma \unit_S .
	\label{eq:main2-1}
\end{equation}
Moreover, if $\cH_S = \cH_\Spr$ and $L_S = L_\Spr$ hold and $L_S$ is semi-bounded, then $U_S := U_{S\to \Spr}$ commutes with $L_S$.
\end{thm}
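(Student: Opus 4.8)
The plan is to remove every domain subtlety of the unbounded operators at the outset by passing to the one-parameter unitary groups they generate, and then to exploit the rigidity of dilations of a unitary channel. Since \eqref{eq:cons} is an equality of self-adjoint operators, the bounded Borel functional calculus (which commutes with the unitary conjugation by $\Uspsp$) applied to $x\mapsto e^{itx}$ gives $\Uspsp^\dag e^{itL_{\Spr\Ppr}}\Uspsp=e^{itL_{SP}}$; using that the strongly commuting summands make $L_{SP}$ generate $e^{itL_S}\otimes e^{itL_P}$ (and likewise for $L_{\Spr\Ppr}$), the conservation law becomes the purely bounded identity $\Uspsp^\dag(e^{itL_{\Spr}}\otimes e^{itL_{\Ppr}})\Uspsp=e^{itL_S}\otimes e^{itL_P}$ for all $t\in\realn$. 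All subsequent manipulations live at this level, where no domains appear.

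I would first pin down the structure of the dilation. Writing $\rho_P=\sum_j p_j|\phi_j\rangle\langle\phi_j|$ with all $p_j>0$ and fixing one $\phi_0$, the fact that $\mathcal{U}_{S\to\Spr}$ maps pure inputs to pure outputs, together with extremality of pure states in $\DeH{\Spr}$, forces each reduced state $\tr_{\Ppr}[\Uspsp(|\psi\rangle\langle\psi|\otimes|\phi_j\rangle\langle\phi_j|)\Uspsp^\dag]$ appearing in $\mathcal{U}_{S\to\Spr}(|\psi\rangle\langle\psi|)=\sum_j p_j\,\tr_{\Ppr}[\Uspsp(|\psi\rangle\langle\psi|\otimes|\phi_j\rangle\langle\phi_j|)\Uspsp^\dag]$ to equal $U_{S\to\Spr}|\psi\rangle\langle\psi|U_{S\to\Spr}^\dag$. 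Hence the pure-probe isometry $\psi\mapsto\Uspsp(\psi\otimes\phi_0)$ already implements $\mathcal{U}_{S\to\Spr}$; since this channel has the single Kraus operator $U_{S\to\Spr}$, the essential uniqueness of Stinespring/Kraus representations yields a fixed unit vector $\eta_0\in\cH_{\Ppr}$ with $\Uspsp(\psi\otimes\phi_0)=(U_{S\to\Spr}\psi)\otimes\eta_0$ for every $\psi\in\cH_S$.

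Now I would sandwich the group form of the conservation law between $\psi\otimes\phi_0$ and $\psi'\otimes\phi_0$. The product structure of $\Uspsp(\psi\otimes\phi_0)$ splits the tensor factors and gives, for all $\psi,\psi'$, the scalar relation $\alpha(t)\,\langle\psi|U_{S\to\Spr}^\dag e^{itL_{\Spr}}U_{S\to\Spr}|\psi'\rangle=\beta(t)\,\langle\psi|e^{itL_S}|\psi'\rangle$ with $\alpha(t)=\langle\eta_0|e^{itL_{\Ppr}}|\eta_0\rangle$ and $\beta(t)=\langle\phi_0|e^{itL_P}|\phi_0\rangle$ continuous and equal to $1$ at $t=0$. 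As operators this reads $\alpha(t)\,U_{S\to\Spr}^\dag e^{itL_{\Spr}}U_{S\to\Spr}=\beta(t)\,e^{itL_S}$; taking norms gives $|\alpha(t)|=|\beta(t)|$, so $c(t):=\beta(t)/\alpha(t)$ has modulus $1$ near $t=0$ and $U_{S\to\Spr}^\dag e^{itL_{\Spr}}U_{S\to\Spr}=c(t)\,e^{itL_S}$. Both sides are strongly continuous one-parameter unitary groups (the left one because $U_{S\to\Spr}U_{S\to\Spr}^\dag=\unit_{\Spr}$), so the group law forces $c(t+s)=c(t)c(s)$ and hence $c(t)=e^{i\gamma t}$ for a unique $\gamma\in\realn$, and the identity then propagates to all $t$. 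Thus $U_{S\to\Spr}^\dag e^{itL_{\Spr}}U_{S\to\Spr}=e^{it(L_S+\gamma\unit_S)}$ for every $t$, and equating generators by Stone's theorem (which also delivers the domain equality) is exactly \eqref{eq:main2-1}.

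For the last statement, put $L_{\Spr}=L_S$ and $U_S:=U_{S\to\Spr}$, so \eqref{eq:main2-1} becomes the unitary equivalence $U_S^\dag L_S U_S=L_S+\gamma\unit_S$, whence $U_S^{\dag n}L_S U_S^{n}=L_S+n\gamma\unit_S$ for all $n\in\mathbb{Z}$. Because unitarily equivalent self-adjoint operators have the same spectrum and $\sigma(L_S+n\gamma\unit_S)=\sigma(L_S)+n\gamma$, the closed set $\sigma(L_S)$ is invariant under translation by every integer multiple of $\gamma$; if $\gamma\neq0$ this makes $\sigma(L_S)$ unbounded both above and below, contradicting semi-boundedness. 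Hence $\gamma=0$ and $U_S^\dag L_S U_S=L_S$, i.e.\ $U_S$ commutes with $L_S$ by the criterion recalled before the theorem. The step I expect to be the main obstacle is the rigidity argument of the second paragraph: rigorously upgrading ``the model implements the unitary channel'' to the factorized form $\Uspsp(\psi\otimes\phi_0)=(U_{S\to\Spr}\psi)\otimes\eta_0$ for a possibly mixed probe on infinite-dimensional spaces, since it is precisely this factorization that lets the tensor factors separate; the extreme-point reduction to a pure probe combined with the uniqueness of Stinespring dilations is the key input, whereas the equivalence of the operator and group forms of \eqref{eq:cons} is routine once the groups are available.
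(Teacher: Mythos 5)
Your proof is correct, and it reaches the scalar identity $U_{S\to \Spr}^\dag e^{itL_\Spr}U_{S\to \Spr}=c(t)\,e^{itL_S}$ by a genuinely different route than the paper. The paper (Supplemental Material, Theorem~S.2) never decomposes $\rho_P$ and never invokes dilation uniqueness: it introduces the unital CP map $\Lambda(a)=\tr_P[(\unit_S\otimes\rho_P)\Uspsp^\dag a\Uspsp]$, notes that implementation of the unitary channel makes $\Lambda(a\otimes\unit_\Ppr)=U_{S\to \Spr}^\dag a U_{S\to \Spr}$ multiplicative, hence $a\otimes\unit_\Ppr$ lies in the multiplicative domain $\mathcal{M}_\Lambda$, deduces by a commutant argument that $\Lambda(\unit_\Spr\otimes b)=\psi(b)\unit_S$ is scalar, and then evaluates $\Lambda(U^\Spr_g\otimes U^\Ppr_g)$ in two ways to get $\phi(U^P_g)\,U^S_g=\psi(U^\Ppr_g)\,U_{S\to \Spr}^\dag U^\Spr_g U_{S\to \Spr}$ --- exactly your relation, with the paper's abstract functional $\psi$ playing the role of your concrete $\langle\eta_0|\cdot|\eta_0\rangle$. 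Your substitute for the multiplicative-domain machinery --- reduce to a pure probe vector by extremality of pure states, then use essential uniqueness of Stinespring/Kraus representations to factor $\Uspsp(\psi\otimes\phi_0)=(U_{S\to \Spr}\psi)\otimes\eta_0$, then sandwich the exponentiated conservation law --- is sound in infinite dimensions, and it makes the physical decoupling of the probe explicit. The local-to-global step likewise differs in form but not in substance: you extend the scalar relation from a neighborhood of $t=0$ via the one-parameter group law (writing $W_t=(W_{t/n})^n$), whereas the paper shows the set of good group elements is an open subgroup and invokes connectedness; for $G=\realn$ these coincide. What the paper's abstraction buys is generality: the multiplicative-domain argument works verbatim for any connected topological group and the same machinery also yields Theorem~\ref{thm:main1}, whereas your dilation-rigidity argument is specific to unitary channels (though more elementary and self-contained for $G=\realn$). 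Finally, your treatment of the semi-bounded case, iterating to get $\sigma(L_S)=\sigma(L_S)+n\gamma$ for all integers $n$, is a harmless variant of the paper's one-step comparison of $\lambda_{\mathrm{min}}(L_S)$ with $\lambda_{\mathrm{min}}(L_S)+\gamma$.
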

The latter part of Theorem~\ref{thm:main2} can be immediately proved from the former part as follows.
Assume $\cH_S = \cH_\Spr$ and $L_S = L_\Spr$.
Then Eq.~\eqref{eq:main2-1} implies
\begin{equation}
	U_{S}^\dag L_S U_{S} = L_S + \gamma \unit_S . 
	\label{eq:spec}
\end{equation}
Suppose that $\sigma (L_S)$ is lower bounded and let $\lambda_{\mathrm{min}}(L_S) \in \realn$ denote the finite infimum of the spectrum $\sigma (L_S)$.
Since the spectra of $L_S$ and $U_{S}^\dag L_S U_{S}$ coincide, Eq.~\eqref{eq:spec} implies $\lambda_{\mathrm{min}}(L_S)  = \lambda_{\mathrm{min}}(L_S) + \gamma$
and therefore $\gamma = 0$.
Hence $U_S$ commutes with $L_S$.
The claim is similarly proved by considering the supremum of $\sigma (L_S)$ when $L_S$ is upper bounded.

Theorems~\ref{thm:main1} and \ref{thm:main2} can be proved by using the notion of the multiplicative domains of unital CP maps~\cite{choi1974,paulsen_2003}.
This notion is recently used in \cite{mohammady2022} \textcolor{black}{to derive} WAY-type trade-off relations for bounded observables.
\textcolor{black}{In the proof, arguments on the topological group $\realn$ and its unitary representations are also essential that derive statements valid for all $t\in \realn$ from those valid only for restricted $t$.}
We also remark that Theorems~\ref{thm:main1} and \ref{thm:main2} can be generalized to general continuous symmetries described by connected topological groups~\cite{pontryagin1986topological,higgins_1975}.
All the details of the proofs, including the generalization to \textcolor{black}{continuous symmetries}, are given in the Supplemental Material.

\textit{\textbf{Applications of the WAY theorem for projective measurements.}}---
Now we see two applications of Theorem~\ref{thm:main1} \textcolor{black}{which show that some kinds of measurements are not implementable.}

The first one is the position measurement under the momentum conservation~\cite{PhysRevLett.67.1956,PhysRevLett.106.110406}.
Since the position and momentum operators of a $1$-dimensional quantum particle are noncommutative in the sense that their spectral measures do not commute, it immediately follows from Theorem~\ref{thm:main1} that no measurement model satisfying the momentum conservation and the Yanase condition can implement the \textcolor{black}{projective} position measurement of the particle.
This gives a positive answer to the open question in \cite{PhysRevLett.106.110406}.

The next one is the \textcolor{black}{projective} measurement of a quadrature amplitude of a single-mode optical field by using beam splitters, phase shifters, and photon counters.
We consider fixed-frequency optical fields and denote by $\hat{a}_A$ the annihilation operator acting on the Hilbert space $\cH_A$ of a mode $A$.
\textcolor{black}{In this situation, the accurate implementation of the projective measurement of the quadrature amplitude operator $\hat{q}_S = (\hat{a}_S + \hat{a}_S^\dag)/2$ is important in continuous-variable (CV) quantum technologies like CV quantum key distribution~\cite{Hirano_2017} or CV quantum teleportation~\cite{furusawa1998}.}
\textcolor{black}{However, since the quadrature amplitude operator $\hat{q}_S = (\hat{a}_S + \hat{a}_S^\dag)/2$ does not commute with the number operator $L_S = \hat{n}_S = \hat{a}_S^\dag \hat{a}_S$, Theorem~\ref{thm:main1} implies that the errorless projective measurement $\sfE_S$ of $\hat{q}_S$ is not implementable by any measurement model satisfying the conservation law of the total photon number and the Yanase condition.}

\textcolor{black}{To see the detail of the above, let us introduce a measurement model of the passive optical operations (see Fig.~\ref{fig:4}) and how Theorem~\ref{thm:main1} works on this model.}
A two-mode passive optical unitary ${\color{black}V} \colon \cH_{A_\rmin} \otimes \cH_{B_\rmin} \to \cH_{A_\rmout} \otimes \cH_{B_\rmout}$ is a unitary such that ${\color{black}V^\dag \hat{a}_{A_\rmout, B_\rmout}V}$ is a linear combination of $\hat{a}_{A_\rmin}$ and $\hat{a}_{B_\rmin}$ and energy (photon number) conservation
\begin{equation}
	{\color{black}V^\dag (\hat{n}_{\Aout}  + \hat{n}_{\Bout}) V}
	= \hat{n}_{\Ain} +\hat{n}_{\Bin}
	\label{eq:pn}
\end{equation}
holds.
Here we abbreviated the identities and tensors.
The Hilbert spaces $\cH_P = \cH_{P_1}\otimes \dots \otimes \cH_{P_N}, \, \cH_{\Spr} = \cH_{\Spr_1} \otimes \dots \otimes \cH_{\Spr_M},  \, \cH_{\Ppr} = \cH_{\Ppr_1} \otimes \dots \otimes \cH_{\Ppr_{N-M+1}}$ are finite tensor products of single mode Hilbert spaces and
the \textcolor{black}{total} unitary $\Uspsp \colon \cH_S \otimes \cH_P \to \cH_{\Spr} \otimes \cH_\Ppr$ is a finite compositions of passive optical unitaries satisfying Eq.~\eqref{eq:pn}.
\textcolor{black}{We assume that the} probe POVM $(\Omega , \Sigma , \sfF_\Ppr (\cdot))$ on $\cH_\Ppr$ commutes with the outcome photon number operators $\hat{n}_{\Ppr_1},\dots , \hat{n}_{\Ppr_{N-M+1}}$ \textcolor{black}{so that the Yanase condition holds.
For} example, if the probe measurement $\sfF_\Ppr$ is realized by post-processing the outcomes of photon-counting measurements on the modes $\Ppr_{1} , \dots , \Ppr_{N-M+1}$, this condition \textcolor{black}{holds.}

\textcolor{black}{Due to Theorem~\ref{thm:main1}, the above model cannot implement the projective measurement $\sfE_S$ of $\hat{q}_S$. 
To see that, let us put the conserved observables in Theorem~\ref{thm:main1} as $L_S := \hat{n}_S, \, L_P := \sum_k \hat{n}_{P_k} ,\, L_\Spr := \sum_k \hat{n}_{\Spr_k}, \, L_\Ppr := \sum_k \hat{n}_{\Ppr_k}$. Then the conservation law \eqref{eq:cons} holds, which is in this case the total photon number conservation $U^\dag (\hat{n}_S + \hat{n}_P)U = \hat{n}_\Spr + \hat{n}_\Ppr $. 
Moreover, since $\hat{q}_S$ does not commute with $\hat{n}_S=L_S$, the projective measurement $\sfE_S$ of $\hat{q}_S$ also does not commute with $L_S$. 
Therefore Theorem~\ref{thm:main1} prohibits the implementation of $\sfE_S$.}
We remark that we do not require any condition on the probe \textcolor{black}{state} $\rho_P$.

\begin{figure} 
\centering
\includegraphics[width=8.8cm,clip]{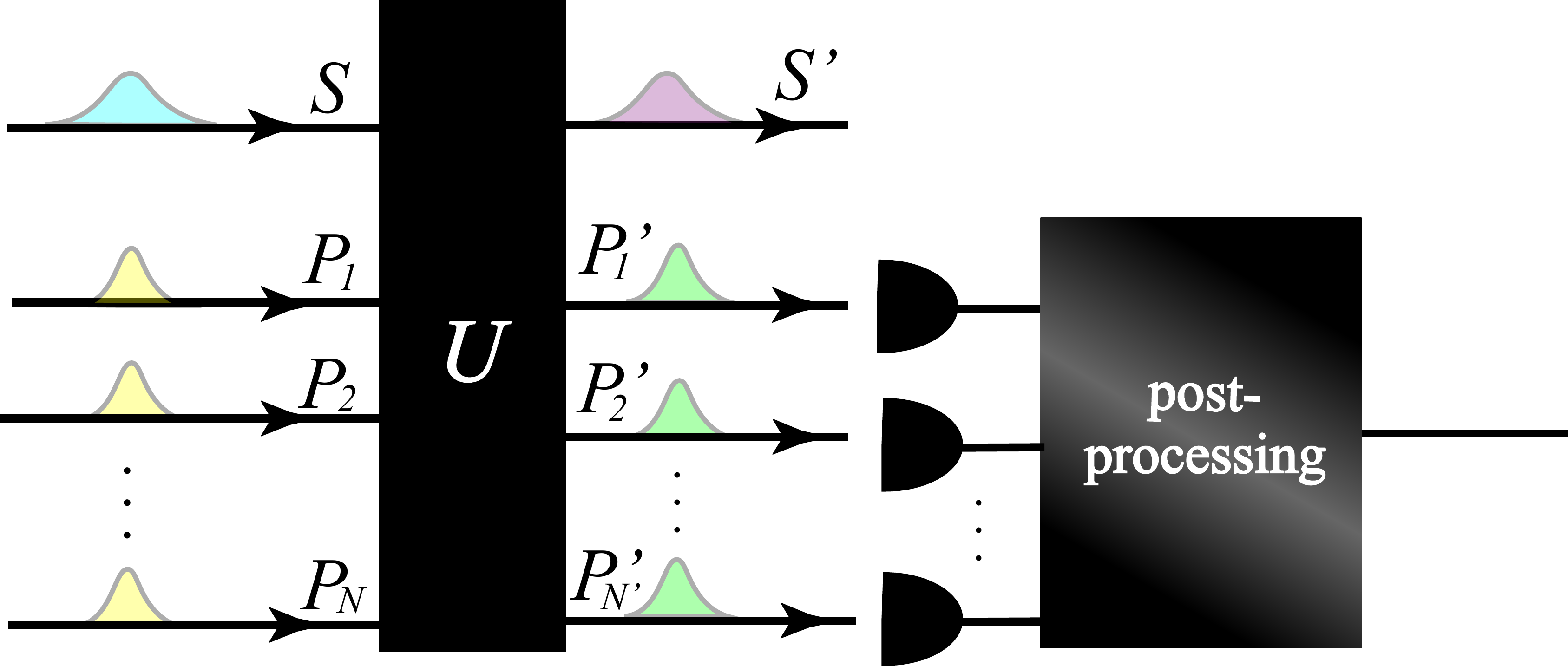}
\caption{Measurement model with passive optical operations and photon-counting measurements.}
\label{fig:4}
\end{figure}

We can still realize \textit{approximate} measurement of $\hat{q}_S$ by the balanced homodyne detection~\cite{yuen1978quantum,Tyc_2004}, in which the signal optical field is mixed with a strong local oscillator (LO) \textcolor{black}{field} by a half beam splitter and the properly normalized difference of the photocounts of the output fields is recorded.
The measurement model of the homodyne detection apparently satisfies the above assumptions and hence does not implement the \textcolor{black}{projective} measurement of $\hat{q}_S$.

On the other hand, it can be shown~\cite{yuen1978quantum} that if we prepare the probe LO state as a coherent state $\ket{\beta_{\mathrm{LO}}}_P = e^{-|\beta_{\mathrm{LO}}|^2/2}\sum_{n=0}^\infty \frac{\beta_{\mathrm{LO}}^n}{\sqrt{n!}} \ket{n}_P \, (\beta_{\mathrm{LO}}   \in \realn)$, where $\ket{n}_P$ denotes the photon number eigenstate of the probe LO field, then, for every initial state $\rho_S$, the probability distribution of the homodyne measurement converges in distribution to that of the \textcolor{black}{projective} measurement of $\hat{q}_S$ in the strong LO limit $\beta_{\mathrm{LO}} \to \infty$.
This is in accordance with the \lq\lq{}positive part\rq\rq{} of the original WAY arguments, since strong LO means a large spread of $\ket{\beta_{\mathrm{LO}}}_P$ in the photon number basis.
We should still be careful about the state-wise nature of the convergence that results from the unboundedness of the conserved observable $\hat{n}_S$.
For example, if we prepare the input state as a coherent state $\ket{\alpha_S}_S$ and $|\alpha_S|$ is comparable with the LO amplitude $\beta_{\mathrm{LO}}$, the outcome distribution is far from that of the \textcolor{black}{projective} measurement of $\hat{q}_S$.

\textit{\textbf{Examples of implementations of unitary channels.}}---
We now give two examples of implementations of a unitary channel in which the constant term $\gamma \unit_S$ in Eq.~\eqref{eq:main2-1} is non-zero.

In the models, the final systems $\Spr$ and $\Ppr$ are, respectively, the same as the initial systems $S$ and $P$.
We take $1$-dimensional quantum particles as the system and probe systems so that the Hilbert spaces $\cH_S = \cH_\Spr$ and $\cH_P = \cH_\Ppr$ are both \textcolor{black}{the} space $L^2(\realn )$ of square-integrable functions on $\realn$.
Let $\hat{x}_{\alpha}$ and $\hat{p}_{\alpha} \, (\alpha = S, P)$ denote respectively the position and momentum operators of the system $\alpha$, which satisfy the Weyl relation
\begin{equation}
	e^{it \hat{x}_\alpha} e^{is \hat{p}_\alpha} = e^{-ist}  e^{is \hat{p}_\alpha}  e^{it \hat{x}_\alpha}  \quad (s,t \in \realn; \, \alpha = S, P),
	\label{eq:weyl}
\end{equation}
where $\hbar$ is set to $1$.
We fix an arbitrary real number $\gamma \neq 0$ and give two implementation models of the unitary channel $\mathcal{U}_S (\rho_S) = U_S \rho_S U_S^\dag$ with $U_S := e^{i\gamma \hat{x}_S}$.
We put $L_S = L_\Spr = \hat{p}_S$ and $L_P = L_\Ppr = \hat{p}_P$.
Then from \textcolor{black}{Eq.~\eqref{eq:weyl}} we can \textcolor{black}{see} that $U_S^\dag L_S  U_S = L_S + \gamma \unit_S$ holds.

In the first example, we take the following total unitary
\begin{equation}
	\Uspo := e^{i\gamma \hat{x}_S} \otimes e^{-i\gamma \hat{x}_P}.
	\label{eq:usp1}
\end{equation}
Then from \eqref{eq:weyl} this unitary satisfies the momentum conservation law 
\begin{equation}
	\Usp^{(1)\dag} (\hat{p}_S + \hat{p}_P) \Uspo = \hat{p}_S + \hat{p}_P,
	\label{eq:consp1}
\end{equation}
where we omitted the tensors and units.
Moreover, for an arbitrary probe state $\rho_P \in \DeH{P}$ we have
\begin{equation}
	\tr_P [ \Uspo (\rho_S \otimes \rho_P)  \Uspod ]
	= U_S \rho_S U_S^\dag 
	\label{eq:model1}
\end{equation}
$(\rho_S \in \DeH{S})$.
This shows that the system-environment model $(\cH_S , \cH_P, \cH_S, \cH_P , \rho_P , \Uspo)$ satisfies all the assumptions of Theorem~\ref{thm:main2} together with \eqref{eq:main2-1} with non-zero $\gamma$.

There is another example of an implementation of the unitary channel $\mathcal{U}_S$ in which the total unitary is not in product form.
For simplicity, we assume $\gamma >0$ and define the total unitary
\begin{equation}
	\Uspt := e^{i\gamma \hat{x}_S} \otimes e^{-i\gamma \hat{x}_P} 1_X (\hat{p}_P)  + \unit_S \otimes 1_{\realn \setminus X} (\hat{p}_P),
	\label{eq:uspt}
\end{equation}
which is not in product form.
Here,
\begin{equation}
	1_A (\lambda) :=
	\begin{cases}
		1 & (\lambda \in A); \\
		0 & (\lambda \not\in A)
	\end{cases}
\end{equation}
is the indicator function of a subset $A\subseteq \realn$ 
and $X\subseteq \realn$ is a measurable set such that $X+\gamma := \{ x+\gamma : x\in X\} = X$ and neither $X$ nor $\realn \setminus X$ is a null set.
For definiteness, we take as 
\textcolor{black}{$
	X = \bigcup_{n\colon \text{integer}} [\gamma n - \gamma /3 , \gamma n + \gamma /3] .
$}
Then since
$
	e^{i\gamma \hat{x}_P} 1_X (\hat{p}_P) e^{-i\gamma \hat{x}_P}
	= 1_X  (\hat{p}_P - \gamma ) = 1_{X + \gamma} (\hat{p}_P) = 1_X(\hat{p}_P)
$,
that is, $e^{i\gamma \hat{x}_P}$ and $1_X (\hat{p}_P)$ commute, the operator $\Uspt$ \textcolor{black}{in} Eq.~\eqref{eq:uspt} is unitary.
Moreover from  
\begin{widetext}
\begin{align}
	\Usptd (e^{it\hat{p}_S} \otimes e^{it\hat{p}_P}) \Uspt
	&= e^{-i\gamma \hat{x}_S} e^{it\hat{p}_S} e^{i\gamma \hat{x}_S} \otimes e^{i\gamma \hat{x}_P} e^{it\hat{p}_P} e^{-i\gamma \hat{x}_P} 1_X(\hat{p}_P)
	+ e^{it\hat{p}_S} \otimes e^{it\hat{p}_P} 1_{\realn \setminus X}(\hat{p}_P)
	\\
	&=e^{it\hat{p}_S} \otimes e^{it\hat{p}_P} \quad (t\in \realn),
\end{align}
\end{widetext}
the momentum conservation 
\begin{equation}
	\Usptd (\hat{p}_S + \hat{p}_P) \Uspt = \hat{p}_S + \hat{p}_P
	\label{eq:consp2}
\end{equation}
holds.
If we take a state $\rho_P \in \DeH{P}$ supported by the projection $1_X(\hat{p}_P)$, 
then we have $ \Uspt (\rho_S \otimes \rho_P)  \Usptd =  \Uspo (\rho_S \otimes \rho_P)  \Uspod  \, (\rho_S \in \DeH{S})$ and therefore from Eq.~\eqref{eq:model1} we can  \textcolor{black}{see} that the system-environment model $( \cH_P, \cH_S, \cH_P , \rho_P , \Uspt)$ implements \textcolor{black}{$\mathcal{U}_S.$}

\textit{\textbf{Conclusion.}}---
\textcolor{black}{We investigated measurement implementations under conservation laws of unbounded observables and} established the WAY theorem for projective measurements under the Yanase condition.
Applications of this WAY theorem revealed that the \textcolor{black}{projective} measurements of the position and the quadrature amplitude are incompatible with the conservation of the momentum and the photon number, respectively.
\textcolor{black}{It is still open whether the original WAY theorem~\cite{PhysRev.120.622} (or Theorem~8.1 of \cite{:/content/aip/journal/jmp/25/1/10.1063/1.526000}) for \textit{repeatable} measurement models can be generalized to unbounded conserved observables.}

We also considered implementation of unitary channels under conservation laws and found that the implemented unitary \textcolor{black}{commutes} with the conserved observable if it is semi-bounded, while the conserved observable can shift up to a constant factor if the conserved \textcolor{black}{observable} is upper and lower unbounded.
The former case in finite dimensions can be immediately derived from the more general trade-off  \textcolor{black}{relation~\cite{TSS2}, while}
the latter case is  \textcolor{black}{essentially infinite-dimensional} and cannot be expected from the finite-dimensional existing works.

\textcolor{black}{Our work has several possible directions of future extensions. 
One such possibility is the generalization to the state-dependent scenario (e.g.\ energy-constrained states), while our results are restricted to state-independent case.
Another possible extension is to consider approximate implementations.}
This work concerns only the extreme case of \textit{exact} implementations of projective measurements or unitary channels.
On the other hand, as mentioned in the introduction, results on approximate implementations of measurements and unitary gates have been actively studied in recent years. With few exceptions, these have not been extended to infinite-dimensional systems. (See the brief review in Supplementary Materials~\footnote{}.)
It is an interesting future direction to extend these results to unbounded observables.

\begin{acknowledgments}
YK acknowledges the support by JSPS Grant-in-Aid for Early-Career Scientists No.~JP22K13977.
HT acknowledges the supports by JSPS Grants-in-Aid for Scientific Research No.~JP19K14610 and No.~JP22H05250, JST PRESTO (Grant No.~JPMJPR2014), and  JST MOONSHOT (Grant No.~JPMJMS2061).
\end{acknowledgments}

%

\clearpage


\begin{widetext}
\begin{center}
{\large \bf Supplemental Material for \protect \\ 
``Wigner-Araki-Yanase theorem for continuous and unbounded conserved observables''}\\
\vspace*{0.3cm}
Yui Kuramochi$^{1}$ and Hiroyasu Tajima$^{2}$\\
\vspace*{0.1cm}
$^{1}${\small \em Department of Physics, Kyushu University, 744 Motooka, Nishi-ku, Fukuoka, Japan}
\\
$^{2}${\small \em Graduate School of Informatics and Engineering,
The University of Electro-Communications,
1-5-1 Chofugaoka, Chofu, Tokyo 182-8585, Japan}
\end{center}

\setcounter{equation}{0}
\setcounter{lemm}{0}
\setcounter{thm}{0}
\setcounter{page}{1}
\renewcommand{\theequation}{S.\arabic{equation}}
\renewcommand{\thethm}{S.\arabic{thm}}
\renewcommand{\thedefi}{S.\arabic{defi}}

This supplemental material is organized as follows.
We first state WAY-type theorems for general continuous symmetries (Theorems~\ref{thm:main1g} and \ref{thm:main2g}) and show that Theorems~\ref{thm:main1} and \ref{thm:main2} in the main part follow from these general theorems as corollaries.
We then prove Theorems~\ref{thm:main1g} and \ref{thm:main2g}.
We also give a brief review of the WAY-type trade-off relations between the implementation error and the required resource in various quantum processes.

\section{WAY-theorems for general continuous symmetries} \label{sec:maing}
To state the main theorems, we first introduce the notion of the strongly continuous unitary representation of a topological group.

\begin{defi}[Continuous unitary representation of a topological group]
\begin{enumerate}[(i)]
\item
$G$ is called a \textit{topological group} \cite{pontryagin1986topological,higgins_1975} if $G$ is a topological space that is also a group such that the group multiplication
\begin{equation}
	G \times G \ni (g,h) \mapsto gh \in G
\end{equation}
and the inverse
\begin{equation}
	G \ni g \mapsto g^{-1} \in G
\end{equation}
are both continuous maps.
A topological group $G$ is said to be \textit{connected} if $G$ is connected as a topological space, i.e.\ $G$ has no proper closed and open subset other than the empty set.
\item
For a group $G$, $\gin{U_g}$ is called a \textit{unitary representation} of $G$ acting on a Hilbert space $\cH_A$ if $U_g$ is a unitary operator on $\cH_A$ for every $g \in G$ and 
\begin{gather}
	U_e = \unit_A ,\\
	U_{gh^{-1}} = U_g U_h^\dag \quad (g,h \in G),
\end{gather}
i.e.\ $G\ni g \mapsto U_g $ is a group homomorphism.
Here, $e\in G$ denotes the unit element of $G$.
\item
Let $\cH_A$ be a Hilbert space.
The \textit{strong operator topology} (SOT) on $\BH{A}$ is the locally convex topology induced by the semi-norms $\{ p_\xi  \}_{\xi \in \cH_A}$, where for every $\xi \in \cH_A$
\begin{equation}
	p_\xi (a) := \|a\xi \| \quad (a\in \BH{A}).
\end{equation}
In terms of nets (Moore-Smith sequences), the SOT is characterized as follows: a net $(a_i)_{i\in I}$ in $\BH{A}$ converges to $a\in \BH{A}$ in the SOT if and only if $\| (a_i -a)\xi \| \to 0$ for every $\xi \in \cH_A$.
\item
A unitary representation $\gin{U_g}$ of a topological group $G$ acting on $\cH_A$ is said to be \textit{strongly continuous} if $G \ni g \mapsto U_g \in \BH{A}$ is continuous respectively in the topology on $G$ and in the SOT on $\BH{A}$.
\end{enumerate}
\end{defi}

For example, the group
\begin{equation}
	\Uone := \{c \in \cmplx : |c|=1 \} ,
\end{equation}
equipped with the Euclidean topology and the ordinary multiplication and inverse 
\begin{equation}
	\Uone \times \Uone \ni (c_1,c_2) \mapsto c_1 c_2 \in \Uone, \quad
	\Uone \ni c \mapsto c^{-1} = \overline{c} \in \Uone
\end{equation}
is a topological group.
Another example of a topological group is the set $\realn$ of real numbers equipped with the Euclidean topology and the additive group operations
\begin{equation}
	\realn \times \realn \ni (s,t) \mapsto s+t \in \realn,
	\quad 
	\realn \ni t \mapsto -t \in \realn .
\end{equation}
For every self-adjoint operator $L_A$ on a Hilbert space $\cH_A$, $(e^{itL_A})_{t\in \realn}$
is a strongly continuous unitary representation of $\realn$.
Conversely, according to Stone\rq{}s theorem (e.g.\ \cite{prugovecki1982quantum}, Theorem~IV.6.1), for every strongly continuous unitary representation $(U_t)_{t\in \realn}$ of $\realn$ acting on a Hilbert space $\cH_A$, there exists a unique self-adjoint operator $L_A$ on $\cH_A$ such that $U_t = e^{itL_A}$ $(t \in \realn)$.

We now state the main theorems of this supplemental material:

\begin{thm}[WAY theorem for projective measurements and group symmetries] \label{thm:main1g}
Let 
$
\mathbb{M} = (\cH_P, \cH_{S^\prime} , \cH_{P^\prime} ,\rho_P ,\Uspsp , (\Omega, \Sigma, \sfF_\Ppr))
$
be a measurement model that implements a PVM $(\Omega ,\Sigma, \sfE_S)$ on $\cH_S$, let $G$ be a connected topological group, and let $\gin{U_g^S}$, $\gin{U_g^P}$, $\gin{U_g^\Spr}$, and $\gin{U_g^\Ppr}$ be unitary representations of $G$ that act respectively on $\cH_S$, $\cH_P$, $\cH_\Spr$, and $\cH_\Ppr$.
Suppose that the following conditions hold:
\begin{itemize}
\item 
$\gin{U^P_g}$ is a strongly continuous unitary representation;
\item ($G$-invariance condition)
\begin{equation}
	\Uspsp^\dag (U_g^\Spr \otimes U_g^\Ppr) \Uspsp = U_g^S \otimes U_g^P 
	\quad (g\in G);
	\label{eq:Ginv}
\end{equation}
\item (the generalized Yanase condition) $\sfF_\Ppr (X)$ commutes with $U_g^\Ppr$ for every $X \in \Sigma$ and every $g\in G$.
\end{itemize}
Then $\sfE_S(X)$ commutes with $U_g^S$ for every $X\in \Sigma$ and every $g\in G$.
\end{thm}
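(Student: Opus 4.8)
The plan is to recast the whole problem in the Heisenberg picture and reduce it to a statement about the multiplicative domain of a single unital completely positive (CP) map, and then to exploit connectedness of $G$ to pass from a neighborhood of the identity to all of $G$. First I would introduce the unital CP map
\[
\Phi \colon \mathbf{B}(\cH_\Spr \otimes \cH_\Ppr) \to \BH{S}, \qquad \Phi(b) := \tr_P[(\unit_S \otimes \rho_P)\,\Uspsp^\dag b\, \Uspsp],
\]
which is the composition of the $*$-isomorphism $b \mapsto \Uspsp^\dag b\, \Uspsp$ with the unital CP compression $c \mapsto \tr_P[(\unit_S \otimes \rho_P)c]$; unitality follows from $\tr \rho_P =1$. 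With this notation the implementation hypothesis reads $\sfE_S(X) = \Phi(f_X)$, where $f_X := \unit_\Spr \otimes \sfF_\Ppr(X)$ satisfies $0\le f_X \le \unit$.

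The crucial step, where the projection-valuedness of $\sfE_S$ is used decisively, is to show that $f_X$ lies in the multiplicative domain $\mathcal{M}_\Phi$. Since $0\le \sfF_\Ppr(X) \le \unit_\Ppr$ implies $f_X^2 \le f_X$ and $\Phi$ is positive, we get $\Phi(f_X^2) \le \Phi(f_X) = \sfE_S(X)$; on the other hand the Kadison--Schwarz inequality for the unital CP map $\Phi$ gives $\Phi(f_X^2) \ge \Phi(f_X)^2 = \sfE_S(X)$, the last equality holding precisely because $\sfE_S(X)$ is a projection. Hence $\Phi(f_X^2) = \Phi(f_X)^2$, which for the self-adjoint element $f_X$ is exactly the condition for membership in $\mathcal{M}_\Phi$. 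By Choi's characterization of the multiplicative domain it then follows that $\Phi(f_X b) = \sfE_S(X)\Phi(b)$ and $\Phi(b f_X) = \Phi(b)\sfE_S(X)$ for every $b \in \mathbf{B}(\cH_\Spr \otimes \cH_\Ppr)$.

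Next I would feed in the symmetry. Put $W_g := U_g^\Spr \otimes U_g^\Ppr$. The generalized Yanase condition makes $\sfF_\Ppr(X)$ commute with $U_g^\Ppr$, so $f_X$ commutes with $W_g$; since $f_X W_g = W_g f_X$, equating the right-hand sides of the two multiplicative-domain identities applied to $b = W_g$ yields $[\sfE_S(X), \Phi(W_g)] = 0$. Using the $G$-invariance condition \eqref{eq:Ginv} to rewrite $\Uspsp^\dag W_g \Uspsp = U_g^S \otimes U_g^P$ and carrying out the partial trace gives $\Phi(W_g) = c_g\, U_g^S$ with the scalar $c_g := \tr[\rho_P U_g^P]$. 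Therefore $c_g\,[\sfE_S(X), U_g^S] = 0$ for every $g \in G$, and I emphasize that up to this point no assumption whatsoever on $\rho_P$ has been used.

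The remaining, and I expect the most delicate, point is that $c_g$ can vanish, so this identity does not by itself give commutation for all $g$. This is exactly where strong continuity and connectedness enter. Because $\rho_P$ is trace-class and $\gin{U_g^P}$ is strongly continuous, $g \mapsto c_g$ is continuous---writing $\rho_P = \sum_n p_n \ket{e_n}\bra{e_n}$, each matrix element is continuous while the tail is controlled uniformly by $\sum_n p_n = 1$---and $c_e = 1$, so the open set $\{g : c_g \ne 0\}$ contains a neighborhood $N$ of the identity on which $[\sfE_S(X), U_g^S] = 0$. Finally, the set $\{g : [\sfE_S(X), U_g^S] = 0\}$ is a subgroup of $G$ by the homomorphism property $U_{gh}^S = U_g^S U_h^S$, and it contains $N$; since any neighborhood of the identity generates a connected topological group, this subgroup must be all of $G$, which is the desired conclusion. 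Specializing to $G = \realn$ with $U_t^A = e^{itL_A}$ then recovers Theorem~\ref{thm:main1}.
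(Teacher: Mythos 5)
Your proposal is correct and follows essentially the same route as the paper's proof: the same unital CP map, the same Schwarz-inequality argument placing $\unit_\Spr \otimes \sfF_\Ppr(X)$ in the multiplicative domain, the same computation $\Phi(U_g^\Spr \otimes U_g^\Ppr) = \tr[\rho_P U_g^P]\,U_g^S$, and the same continuity-plus-connectedness conclusion. The only cosmetic difference is at the very end, where you invoke the fact that a neighborhood of the identity generates a connected group, while the paper phrases it as the nonexistence of proper open subgroups; these are equivalent and interchangeable.
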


\begin{thm}[WAY theorem for unitary channels and group symmetries] \label{thm:main2g}
Let $U_{S\to \Spr} \colon \cH_S \to \cH_\Spr$ be a unitary operator between Hilbert spaces $\cH_S$ and $\cH_\Spr$, let $\mathcal{U}_{S\to \Spr} \colon \TrH{S} \to \TrH{\Spr}$ be the unitary channel defined by $\mathcal{U}_{S\to \Spr} (\rho_S) := U_{S\to \Spr} \rho_S U_{S\to \Spr}^\dag$ $(\rho_S \in \TrH{S})$, let $\mathbb{M} = (\cH_P, \cH_{S^\prime} , \cH_{P^\prime} , \rho_P,\Uspsp)$ be a system-environment model that implements $\mathcal{U}_{S\to \Spr}$, and let $G$ be a connected topological group.
Suppose that there are strongly continuous unitary representations $\gin{U_g^S}$, $\gin{U_g^P}$, $\gin{U_g^\Spr}$, and $\gin{U_g^\Ppr}$ of $G$ that act respectively on $\cH_S$, $\cH_P$, $\cH_\Spr$, and $\cH_\Ppr$ and satisfy the $G$-invariance condition~\eqref{eq:Ginv}.
Then there exists a $1$-dimensional continuous unitary representation $G\ni g \mapsto c_g \in \Uone$ such that
\begin{equation}
	U_{S\to \Spr}^\dag U_g^\Spr U_{S\to \Spr} = c_g U^S_g  \quad (g\in G).
	\label{eq:main2g-1}
\end{equation}
\end{thm}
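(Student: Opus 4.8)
The plan is to use the reversibility of the implemented channel to pin down the dilation, and then let the $G$-invariance act. Write $V:=U_{S\to\Spr}$, let $\cK\subseteq\cH_P$ be the support of $\rho_P$, and let $Q$ be the orthogonal projection onto $\cK$. First I would establish the structural fact that there is an isometry $W\colon\cK\to\cH_\Ppr$ with $\Uspsp(\xi\otimes\eta)=V\xi\otimes W\eta$ for all $\xi\in\cH_S$ and $\eta\in\cK$. This is the statement that a unitary (hence reversible) channel leaks no information to its environment: purifying $\rho_P$ on an ancilla $R$ and feeding in a pure state $\xi$, the global output on $\cH_\Spr\otimes\cH_\Ppr\otimes\cH_R$ is pure with $\Spr$-marginal equal to the pure state $V\ket{\xi}\bra{\xi}V^\dag$, so it must be a product $V\ket{\xi}\otimes\ket{\eta_\xi}$; linearity of the output in $\xi$ forces $\ket{\eta_\xi}$ to be independent of $\xi$, and reading off the Schmidt components of the purification produces the isometry $W$ (isometric because $\Uspsp$ and $V$ are). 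Equivalently, this is the assertion that every $U^\dag(b\otimes\unit_\Ppr)U$ lies in the multiplicative domain of the probe-averaging map $a\mapsto\tr_P[(\unit_S\otimes\rho_P)a]$.

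Second, I would feed this factorization into the invariance condition \eqref{eq:Ginv}, rewritten as $(U_g^\Spr\otimes U_g^\Ppr)\Uspsp=\Uspsp(U_g^S\otimes U_g^P)$. Pairing both sides, evaluated on $\xi\otimes p$, against $\Uspsp(\psi\otimes q)=V\psi\otimes Wq$ with $\xi,\psi\in\cH_S$ and $p,q\in\cK$, and using unitarity of $\Uspsp$ on the right-hand side and the factorization on the left, yields $\braket{\psi|U_g^S\xi}\braket{q|C_g\, p}=\braket{\psi|A_g\xi}\braket{q|B_g\, p}$ for all such vectors, where $A_g:=V^\dag U_g^\Spr V$, $B_g:=W^\dag U_g^\Ppr W$, and $C_g:=QU_g^P|_\cK$. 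This is exactly the operator identity $U_g^S\otimes C_g=A_g\otimes B_g$ on $\cH_S\otimes\cK$. Crucially, this route never requires $U_g^P$ to preserve $\cK$, which matters because $\rho_P$ is \emph{not} assumed $G$-invariant.

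Third, I would extract the phase. By strong continuity of $\gin{U_g^P}$ one has $\braket{p|C_g\,p}=\braket{p|U_g^P p}\to\|p\|^2$ as $g\to e$, so $C_g\neq0$ on some neighborhood $N$ of $e$; there, uniqueness of the tensor factorization (with $U_g^S$ and $C_g$ both nonzero) gives a scalar $c_g$ with $A_g=c_gU_g^S$, and $|c_g|=1$ since $A_g$ and $U_g^S$ are unitary. To promote this to all of $G$ I would use that $A$ and $U^S$ are representations: for $g=g_1\cdots g_n$ with each $g_i\in N$ we get $A_g=(\prod_ic_{g_i})U_g^S$. Since $G$ is connected, $N$ generates $G$, so every $g$ is such a product; hence $A_g=c_gU_g^S$ holds for all $g$ with a uniquely determined $c_g\in\Uone$, and the relation $c_{gh}U_{gh}^S=A_{gh}=A_gA_h=c_gc_hU_{gh}^S$ shows $g\mapsto c_g$ is a homomorphism. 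Continuity of $c_g=\braket{U_g^S\phi|A_g\phi}$ (any unit $\phi$) follows from strong continuity of $\gin{U_g^S}$ and $\gin{U_g^\Spr}$, giving \eqref{eq:main2g-1}.

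The step I expect to be the real obstacle is this last local-to-global passage. The compression $C_g$ can genuinely vanish for some $g$---for instance when $U_g^P$ rotates the (possibly finite-coherence) support of $\rho_P$ into an orthogonal subspace---so the tensor identity is vacuous there and the relation $A_g=c_gU_g^S$ cannot be read off pointwise. It is precisely the topological-group structure, namely that any neighborhood of the identity generates a connected group, together with the homomorphism property of the representations, that repairs this and yields a globally defined character $c$; this is also where the reduction to one-parameter ($\realn$) subgroups in the proof of Theorem~\ref{thm:main2} enters.
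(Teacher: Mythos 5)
Your proof is correct, but it takes a genuinely different route from the paper's. The paper never factorizes the dilation unitary: it works entirely in the Heisenberg picture with the unital CP map $\Lambda(a) := \tr_P[(\unit_S\otimes\rho_P)\Uspsp^\dag a \Uspsp]$, shows via the Schwarz inequality that every $a\otimes\unit_\Ppr$ lies in the multiplicative domain $\mathcal{M}_\Lambda$, deduces from a commutant argument that $\Lambda(\unit_\Spr\otimes b)=\psi(b)\unit_S$ is scalar, and then evaluates $\Lambda(U_g^\Spr\otimes U_g^\Ppr)$ in two ways to get $\tr[\rho_P U_g^P]\,U_g^S=\psi(U_g^\Ppr)\,V_g^S$; your route instead first proves the structural ``no-leakage'' fact $\Uspsp(\xi\otimes\eta)=V\xi\otimes W\eta$ on $\cH_S\otimes\cK$ via purification, and then reduces everything to uniqueness of tensor factorization of the identity $U_g^S\otimes C_g=A_g\otimes B_g$. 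Your version is more elementary (no Choi/Paulsen machinery) and yields a strictly stronger structural description of the dilation restricted to the support of $\rho_P$, with correct attention to the fact that $U_g^P$ need not preserve that support; the paper's version buys uniformity, since the identical $\Lambda$-and-multiplicative-domain setup is what proves the projective-measurement theorem (Theorem~\ref{thm:main1g}), so both theorems fall out of one lemma, and it sidesteps purification and support issues entirely. The endgames coincide: both exploit nonvanishing of the probe overlap ($\braket{p|U_g^P p}$ for you, $\tr[\rho_P U_g^P]$ for the paper) on a neighborhood of $e$, and both globalize by the same topological fact (the paper phrases it as ``a connected group has no proper open subgroup'' applied to $H:=\{g: V_g^S\in\cmplx\, U_g^S\}$, you phrase it as ``a neighborhood of $e$ generates a connected group''; these are the same argument). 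Two trivial points of hygiene in your write-up: take $N$ symmetric, or note that $A_{g^{-1}}=A_g^\dag=\overline{c_g}\,U^S_{g^{-1}}$, since generation involves inverses; and your parenthetical claim that the factorization is ``equivalent'' to a multiplicative-domain statement is really the bridge to the paper's actual proof --- your main argument does not depend on it, so its imprecision is harmless.
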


Now we prove Theorems~\ref{thm:main1} and \ref{thm:main2} based on Theorems~\ref{thm:main1g} and \ref{thm:main2g}.

\begin{proof}[Proof of Theorem~\ref{thm:main1}.]
We apply Theorem~\ref{thm:main1g} by putting $G = \realn$ and $U_t^\alpha  = e^{itL_\alpha}$ ($t\in \realn$; $\alpha = S, P, \Spr, \Ppr$).
Then we can easily see that the $G$-invariance and the generalized Yanase conditions in this case follow respectively from the conservation law~\eqref{eq:cons} and the Yanase condition.
Thus, Theorem~\ref{thm:main1g} implies that $\sfE_S(X)$ commutes with $e^{itL_S}$ for all $t\in \realn$ and all $X\in \Sigma$.
But from \cite{reed1972methods} (Theorem~VIII.13), this implies the commutativity of $\sfE_S (X)$ and $L_S$ for all $X\in \Sigma$.
\end{proof}

\begin{proof}[Proof of Theorem~\ref{thm:main2}.]
Similarly we apply Theorem~\ref{thm:main2g} by putting $G = \realn$ and $U_t^\alpha  = e^{itL_\alpha}$ ($t\in \realn$; $\alpha = S, P, \Spr, \Ppr$).
Then it follows that there exists a $1$-dimensional continuous unitary representation $\realn \ni t \mapsto c_t \in \Uone$ such that 
\begin{equation}
	 e^{it U_{S\to \Spr}^\dag L_\Spr U_{S\to \Spr}} 
	 =U_{S\to \Spr}^\dag e^{itL_\Spr} U_{S\to \Spr}
	 =c_t e^{itL_S}
	 \quad (t\in \realn).
	\label{eq:m2-p1}
\end{equation}
From Stone\rq{}s theorem, there exists a real number $\gamma$ such that $c_t = e^{it\gamma }$.
Thus, Eq.~\eqref{eq:m2-p1} implies
\begin{equation}
	e^{it U_{S\to \Spr}^\dag L_\Spr U_{S\to \Spr}} 
	= e^{it (L_S + \gamma \unit_S)} 
	\quad (t\in \realn).
\end{equation}
From the uniqueness part of Stone\rq{}s theorem, this implies Eq.~\eqref{eq:main2-1}.
The latter part of Theorem~\ref{thm:main2} has already been proved in the main part.
\end{proof}


\section{Proofs of Theorems~\ref{thm:main1g} and \ref{thm:main2g}} \label{sec:proof1}
In this section we prove Theorems~\ref{thm:main1g} and \ref{thm:main2g}.

The proofs are based on the following notion of the multiplicative domain of a unital CP map.
Let $\Lambda \colon \BH{A} \to \BH{B}$ be a CP map that is also unital, i.e.\ $\Lambda (\unit_A) = \unit_B$.
We define the \textit{multiplicative domain} of $\Lambda$ by
\begin{equation}
	\mathcal{M}_\Lambda := \{ a \in \BH{A} : \Lambda(a^\dag a) = \Lambda (a^\dag) \Lambda (a) , \, \Lambda(a a^\dag ) = \Lambda (a) \Lambda (a^\dag)  \}.
\end{equation}
Then we have 

\begin{lemm}[\cite{choi1974}; \cite{paulsen_2003}, Proposition~3.3 and Theorem~3.18] \label{lemm:md}
Let $\Lambda \colon \BH{A} \to \BH{B}$ be a unital CP map.
\begin{enumerate}
\item \label{it:schwarz}
The Schwarz inequality
\begin{equation}
	\Lambda (a^\dag a) \geq \Lambda (a^\dag) \Lambda(a)
	\label{eq:schwarz}
\end{equation}
holds for all $a \in \BH{A}$. Here, for $a,b\in \BH{A}$, $a \geq b$ (or $b\leq a$) means that $a-b$ is a non-negative operator.
\item \label{it:md}
For arbitrary $a \in \BH{A}$, $a\in \mathcal{M}_\Lambda$ holds if and only if
\begin{equation}
	\Lambda(ba) = \Lambda(b) \Lambda(a), \quad \Lambda(ab) = \Lambda (a) \Lambda(b)
\end{equation}
holds for all $b\in \BH{A}$.
\end{enumerate}
\end{lemm}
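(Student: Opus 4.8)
The plan is to prove the two assertions separately, deriving the multiplicative-domain characterization (item~\ref{it:md}) from the Schwarz inequality (item~\ref{it:schwarz}) applied to a $2\times 2$ amplification of $\Lambda$. Throughout I would use only that $\Lambda$ is unital and $n$-positive for every $n$ (which complete positivity guarantees), together with the elementary fact that a positive map is $\dag$-preserving, so that $\Lambda(a^\dag)=\Lambda(a)^\dag$ for all $a\in\BH{A}$.

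For item~\ref{it:schwarz}, the key observation is that the operator matrix $\begin{pmatrix}\unit_A & a\\ a^\dag & a^\dag a\end{pmatrix}$ in $M_2(\BH{A})$ equals $\xi\xi^\dag$ for the column $\xi=(\unit_A,a^\dag)^{\mathrm{T}}$ and is therefore non-negative. Applying the $2$-positive map $\Lambda$ entrywise and using unitality $\Lambda(\unit_A)=\unit_B$ yields the non-negative block matrix $\begin{pmatrix}\unit_B & \Lambda(a)\\ \Lambda(a)^\dag & \Lambda(a^\dag a)\end{pmatrix}\geq 0$. Evaluating its quadratic form on vectors of the form $(-\Lambda(a)\eta,\eta)$ collapses the first block component to zero and leaves $\langle\eta,(\Lambda(a^\dag a)-\Lambda(a)^\dag\Lambda(a))\eta\rangle\geq 0$ for all $\eta$, which is precisely the Schwarz inequality. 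This is a hands-on Schur-complement computation that sidesteps any invertibility subtleties beyond that of $\unit_B$.

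For item~\ref{it:md}, the ``if'' direction is immediate: choosing $b=a^\dag$ and $b=a$ in the two hypothesised identities recovers the defining equalities of $\mathcal{M}_\Lambda$. For the ``only if'' direction I would amplify: the map $\mathrm{id}_2\otimes\Lambda$ is again unital and completely positive, so item~\ref{it:schwarz} applies to it. Feeding in the element $T=\begin{pmatrix}a & b\\ 0 & 0\end{pmatrix}$ and computing $T^\dag T$ and $(\mathrm{id}_2\otimes\Lambda)(T)^\dag(\mathrm{id}_2\otimes\Lambda)(T)$ produces a non-negative block matrix whose $(1,1)$ entry is $\Lambda(a^\dag a)-\Lambda(a)^\dag\Lambda(a)$ and whose $(1,2)$ entry is $\Lambda(a^\dag b)-\Lambda(a)^\dag\Lambda(b)$. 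When $a$ satisfies $\Lambda(a^\dag a)=\Lambda(a)^\dag\Lambda(a)$ the diagonal entry vanishes, and a short sublemma---if $\begin{pmatrix}0 & Y\\ Y^\dag & Z\end{pmatrix}\geq 0$ then $Y=0$, proved by scaling the off-diagonal vector and forcing the linear term of the quadratic form to vanish---gives the off-diagonal identity $\Lambda(a^\dag b)=\Lambda(a)^\dag\Lambda(b)$ for all $b$.

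It then remains to convert these one-sided identities into the two multiplicative relations. From the first defining condition, the step above yields $\Lambda(a^\dag b)=\Lambda(a)^\dag\Lambda(b)$; taking adjoints and relabelling $b$ gives $\Lambda(ba)=\Lambda(b)\Lambda(a)$ for all $b$. Running the same argument with $a$ replaced by $a^\dag$---whose hypothesis $\Lambda(a a^\dag)=\Lambda(a)\Lambda(a)^\dag$ is exactly the second defining condition of $\mathcal{M}_\Lambda$---gives $\Lambda(ab)=\Lambda(a)\Lambda(b)$. I expect the main obstacle to be precisely this final bookkeeping: one must track that each of the two defining equalities feeds the amplification argument in the correct variable, and that the adjoint manipulations (via $\Lambda(x^\dag)=\Lambda(x)^\dag$) land on left- versus right-multiplicativity as intended. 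As a fallback should the elementary block computation become unwieldy, I would instead use a Stinespring dilation $\Lambda(\cdot)=V^\dag\pi(\cdot)V$ with $V$ an isometry (from unitality); there $\Lambda(a^\dag a)=\Lambda(a)^\dag\Lambda(a)$ forces $(\unit-VV^\dag)\pi(a)V=0$, i.e.\ $\pi(a)V=V\Lambda(a)$, whence $\Lambda(ba)=V^\dag\pi(b)\pi(a)V=V^\dag\pi(b)V\Lambda(a)=\Lambda(b)\Lambda(a)$ and, symmetrically from the second condition, $\Lambda(ab)=\Lambda(a)\Lambda(b)$.
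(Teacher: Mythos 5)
Your proposal is correct and takes essentially the same route as the proof the paper relies on: the paper itself gives no proof of this lemma, deferring to the cited Choi (1974) and Paulsen (2003, Proposition~3.3 and Theorem~3.18), and your two steps---positivity of $\bigl(\begin{smallmatrix}\unit_A & a\\ a^\dag & a^\dag a\end{smallmatrix}\bigr)$ pushed through the $2$-positive map for the Schwarz inequality, then amplification with $T=\bigl(\begin{smallmatrix}a & b\\ 0 & 0\end{smallmatrix}\bigr)$ combined with the sublemma that a non-negative block matrix with vanishing $(1,1)$ corner has vanishing off-diagonal block---are precisely the arguments of those references. Your final bookkeeping is also sound (the first defining equality yields $\Lambda(a^\dag b)=\Lambda(a)^\dag\Lambda(b)$ and hence, by adjoints, $\Lambda(ba)=\Lambda(b)\Lambda(a)$, while the second, applied to $a^\dag$, yields $\Lambda(ab)=\Lambda(a)\Lambda(b)$), and the Stinespring fallback via $\pi(a)V=V\Lambda(a)$ is a valid standard alternative.
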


\begin{proof}[Proof of Theorem~\ref{thm:main1g}]
Throughout the proof, we fix an arbitrary measurable set $X\in \Sigma$.
We define a unital CP map $\Lambda \colon \mathbf{B}(\cH_\Spr \otimes \cH_\Ppr) \to \BH{S}$ by 
\begin{equation}
	\Lambda (a) := \tr_P [  (\unit_S \otimes \rho_P) \Uspsp^\dag a \Uspsp  ]
	\quad (a\in \mathbf{B}(\cH_\Spr \otimes \cH_\Ppr)).
	\label{eq:Ldef}
\end{equation}
Since the measurement model $\mathbb{M}$ implements the PVM $(\Omega ,\Sigma, \sfE_S )$, we have
\begin{align}
	\sfE_S (X) &= \tr_P[(\unit_S \otimes \rho_P) \Uspsp^\dag (\unit_\Spr \otimes \sfF_\Ppr (X)) \Uspsp] \\
	&= \Lambda (\unit_\Spr \otimes \sfF_\Ppr (X)) .
	\label{eq:LaF}
\end{align}
Then we obtain
\begin{align}
	\Lambda (\unit_\Spr \otimes \sfF_\Ppr (X))^2
	&= \sfE_S(X)^2 
	\quad (\because \text{Eq.~\eqref{eq:LaF}}) 
	\\
	&= \sfE_S(X) \quad (\because \text{$\sfE_S(X)$ is a projection}) \\
	&=  \Lambda (\unit_\Spr \otimes \sfF_\Ppr (X)) 
	\quad (\because \text{Eq.~\eqref{eq:LaF}}) 
	\\
	&\geq \Lambda ( (\unit_\Spr \otimes \sfF_\Ppr (X))^2)
	\quad (\because \unit_\Spr \otimes \sfF_\Ppr (X) \geq (\unit_\Spr \otimes \sfF_\Ppr (X))^2) 
	\\
	&\geq \Lambda (\unit_\Spr \otimes \sfF_\Ppr (X))^2
	\quad (\because \text{Schwarz inequality \eqref{eq:schwarz}}) .
	\label{eq:LaF2}
\end{align}
This implies that equality of \eqref{eq:LaF2} holds and hence $\unit_\Spr \otimes \sfF_\Ppr (X) \in \mathcal{M}_\Lambda$.
From the $G$-invariance condition~\eqref{eq:Ginv}, for every $g\in G$ we have
\begin{align}
	\Lambda (U^\Spr_g \otimes U^\Ppr_g)
	&= \tr_P [  (\unit_S \otimes \rho_P) \Uspsp^\dag (U^\Spr_g \otimes U^\Ppr_g)   \Uspsp  ]
	\\
	&=  \tr_P [  (\unit_S \otimes \rho_P) U^S_g \otimes U^P_g ] \\
	&= \tr [\rho_P U^P_g] U^S_g .
	\label{eq:usg}
\end{align}
Thus for every $g\in G$ we obtain
\begin{align}
	\tr [\rho_P U^P_g]  \sfE_S(X)  U^S_g
	&= \Lambda  (\unit_\Spr \otimes \sfF_\Ppr (X)) \Lambda (U^\Spr_g \otimes U^\Ppr_g)
	\quad (\because \text{Eqs.~\eqref{eq:LaF} and \eqref{eq:usg}}) \\
	&= \Lambda  ((\unit_\Spr \otimes \sfF_\Ppr (X) ) (U^\Spr_g \otimes U^\Ppr_g ) )
	\quad (\because \unit_\Spr \otimes \sfF_\Ppr (X) \in \mathcal{M}_\Lambda ) \\
	&= \Lambda  ( (U^\Spr_g \otimes U^\Ppr_g ) (\unit_\Spr \otimes \sfF_\Ppr (X) ))
	\quad (\because \text{the generalized Yanase condition}) \\
	&=\Lambda  (U^\Spr_g \otimes U^\Ppr_g ) \Lambda (\unit_\Spr \otimes \sfF_\Ppr (X) )
	\quad (\because \unit_\Spr \otimes \sfF_\Ppr (X) \in \mathcal{M}_\Lambda )  \\
	&= \tr [\rho_P U^P_g]  U^S_g \sfE_S(X)  \quad  (\because \text{Eqs.~\eqref{eq:LaF} and \eqref{eq:usg}}). \label{eq:almostcom}
\end{align}
From the strong continuity of the unitary representation $(U_g^P)_{g\in G}$, the function $G\ni g \mapsto \tr [\rho_P U^P_g]  \in \cmplx $ is continuous.
Therefore, since $ \tr [\rho_P U^P_e] = 1 \neq 0$, there exists an open neighborhood $W \subseteq G$ of the unit element $e\in G$ such that $ \tr [\rho_P U^P_g] \neq 0$ for all $g\in W$.
Thus Eq.~\eqref{eq:almostcom} implies
\begin{equation}
	 \sfE_S(X)  U^S_g =  U^S_g  \sfE_S(X) \quad ( g\in W).
	 \label{eq:commW}
\end{equation}

We now define 
\begin{equation}
	H := \{ g \in G : \sfE_S(X)  U^S_g =  U^S_g  \sfE_S(X) \}
\end{equation}
and show that $H$ is an open subgroup of $G$.
We have $e \in H$ because $U^S_e = \unit_S$ commutes with $\sfE_S (X)$.
For every $g,h \in H$ we have
\begin{align}
	\sfE_S(X) U^S_{gh^{-1}}
	&= \sfE_S(X) U^S_g U^{S\dag}_{h} \\
	&= U^S_g  \sfE_S(X) U^{S\dag}_{h} \\
	&= U^S_g  (U^{S}_{h} \sfE_S(X) )^\dag \\
	&=  U^S_g  ( \sfE_S(X)  U^{S}_{h}  )^\dag \\
	&=  U^S_g U^{S\dag}_{h}  \sfE_S(X) \\
	&= U^S_{gh^{-1}}   \sfE_S(X) ,
\end{align}
which implies $gh^{-1} \in H$.
Thus, $H$ is a subgroup of $G$.
From Eq.~\eqref{eq:commW} we have $W \subseteq H$ and therefore $gW := \{gh : h \in W \} \subseteq H$ for every $g \in H$ because $H$ is a subgroup of $G$.
Since $gW$ is an open neighborhood of $g\in H$, this shows that $H$ is open.
Thus, $H$ is an open subgroup of $G$.
Since every connected topological group has no proper open subgroup (\cite{higgins_1975}, Section~II.7), we should have $G= H $, which proves the claim.
\end{proof}

\begin{proof}[Proof of Theorem~\ref{thm:main2g}]
As in the proof Theorem~\ref{thm:main1g}, we define a unital CP map $\Lambda \colon \mathbf{B}(\cH_\Spr \otimes \cH_\Ppr) \to \BH{S}$ by Eq.~\eqref{eq:Ldef}. 
Since the system-environment model $\mathbb{M}$ implements $\mathcal{U}_{S\to \Spr}$, we have
\begin{align}
	U_{S\to \Spr}^\dag a U_{S\to \Spr}
	&= \tr_{P} [  (\unit_S \otimes \rho_P) \Uspsp^\dag (a\otimes \unit_\Ppr) \Uspsp] 
	\\
	&= \Lambda (a \otimes \unit_\Ppr)
	\quad ( a \in \BH{\Spr}). \label{eq:aLa}
\end{align}
Therefore for every $a\in \BH{\Spr}$ we have
\begin{align}
	\Lambda (a^\dag \otimes \unit_\Ppr) \Lambda (a \otimes \unit_\Ppr)
	&= U_{S\to \Spr}^\dag a^\dag U_{S\to \Spr} U_{S\to \Spr}^\dag a U_{S\to \Spr} \\
	&= U_{S\to \Spr}^\dag a^\dag a U_{S\to \Spr} \\
	&= \Lambda (a^\dag a \otimes \unit_\Ppr ) \\
	&= \Lambda ((a^\dag \otimes \unit_\Ppr)(a \otimes \unit_\Ppr))
\end{align}
and, by interchanging $a$ and $a^\dag$, we also have
\begin{equation}
	\Lambda (a \otimes \unit_\Ppr) \Lambda (a^\dag \otimes \unit_\Ppr)
	=  \Lambda ((a \otimes \unit_\Ppr)(a^\dag \otimes \unit_\Ppr)).
\end{equation}
Thus, we obtain $a \otimes \unit_{\Ppr} \in \mathcal{M}_\Lambda$ for every $a \in \BH{\Spr}$.

Now for every $a\in \BH{S}$ and every $b \in \BH{\Ppr}$ we have
\begin{align}
	\Lambda (\unit_\Spr \otimes b) a
	&= \Lambda (\unit_\Spr \otimes b) U_{S\to \Spr}^\dag a^\prime U_{S\to \Spr}
	\quad (a^\prime := U_{S\to \Spr}a U_{S\to \Spr}^\dag \in \BH{\Spr}) \\
	&= \Lambda (\unit_\Spr \otimes b)  \Lambda (a^\prime \otimes \unit_\Ppr)
	\quad (\because \text{Eq.~\eqref{eq:aLa}}) \\
	&= \Lambda ((\unit_\Spr \otimes b) (a^\prime \otimes \unit_\Ppr) )
	\quad (\because a^\prime \otimes \unit_\Ppr \in \mathcal{M}_\Lambda) \\
	&=\Lambda ( (a^\prime \otimes \unit_\Ppr) (\unit_\Spr \otimes b)) \\
	&= \Lambda  (a^\prime \otimes \unit_\Ppr)  \Lambda (\unit_\Spr \otimes b)
	\quad (\because a^\prime \otimes \unit_\Ppr \in \mathcal{M}_\Lambda) \\
	&= a \Lambda (\unit_\Spr \otimes b).
\end{align}
This implies $\Lambda (\unit_\Spr \otimes b) \in \BH{S}^\prime = \cmplx \unit_S$, 
where the prime denotes the commutant and $\cmplx \unit_S := \{ c\unit_S : c \in \cmplx\} $.
Therefore, there exists a linear functional $\psi \colon \BH{\Ppr} \to \cmplx$ such that
\begin{equation}
	\Lambda (\unit_\Spr \otimes b) = \psi(b) \unit_S 
	\quad (b\in \BH{\Ppr}) . 
	\label{eq:psib}
\end{equation}
Thus, for every $g\in G$ we have
\begin{align}
	\Lambda (U_g^\Spr \otimes U_g^\Ppr) 
	&= \Lambda ((U_g^\Spr \otimes \unit_\Ppr) (\unit_\Spr \otimes U_g^\Ppr))
	\\
	&= \Lambda (U_g^\Spr \otimes \unit_\Ppr) \Lambda (\unit_\Spr \otimes U_g^\Ppr) 
	\quad (\because U_g^\Spr \otimes \unit_\Ppr \in \mathcal{M}_\Lambda)
	\\
	&= \psi(U_g^\Ppr)  U_{S\to \Spr}^\dag U_g^\Spr U_{S\to \Spr} 
	\quad (\because \text{Eqs.~\eqref{eq:aLa} and \eqref{eq:psib}}).
	\label{eq:LaUg1}
\end{align}
On the other hand, $\Lambda (U_g^\Spr \otimes U_g^\Ppr)$ can also be written as
\begin{align}
	\Lambda (U_g^\Spr \otimes U_g^\Ppr) 
	&= \tr_{P} [  (\unit_S \otimes \rho_P) \Uspsp^\dag (U_g^\Spr \otimes U_g^\Ppr) \Uspsp] 
	\\
	&=  \tr_{P} [  (\unit_S \otimes \rho_P) (U_g^S \otimes U_g^P)] \quad (\because \text{$G$-invariance condition \eqref{eq:Ginv}}) 
	\\
	&= \phi(U_g^P) U_g^S , \label{eq:LaUg2}
\end{align}
where $\phi(b) := \tr [\rho_P b]$ $(b\in \BH{P})$.
Hence from Eqs.~\eqref{eq:LaUg1} and \eqref{eq:LaUg2} we obtain
\begin{equation}
	\phi(U_g^P) U^S_g  = \psi(U_g^\Ppr) V^S_g
	\quad (g\in G),
	\label{eq:phipsi}
\end{equation}
where we defined a strongly continuous unitary representation $\gin{V^S_g}$ of $G$ acting on $\cH_S$ by
\begin{equation}
	V_g^S :=  U_{S\to \Spr}^\dag U_g^\Spr U_{S\to \Spr}  \quad (g\in G).
\end{equation}
By the strong continuity of $(U_g^P)_{g\in G}$, the function $G\ni g \mapsto \phi(U_g^P) \in \cmplx$ is continuous.
Thus, from $\phi(U_e^S)  =1 \neq 0$, there exists an open neighborhood $W \subseteq G$ of $e\in G$ such that $\phi(U_g^S) \neq 0$ for all $g\in W$.
Hence, by noting $U^S_g \neq 0$, Eq.~\eqref{eq:phipsi} implies that for every $g\in W$ there exists a scalar $c \in \cmplx$ such that
\begin{equation}
	V_g^S = c U_g^S . 
	\label{eq:VU}
\end{equation}
We now define 
\begin{equation}
	H := \{ g\in G : \exists  c\in \cmplx  \text{ s.t.\ }  V_g^S = c U_g^S   \}
\end{equation}
and prove that $G=H$.
Since $V_e^S =  U_e^S = \unit_S $, we have obviously $e \in H$.
Let $g,h \in H$ be arbitrary elements and take $c_1,c_2 \in \cmplx$ such that 
$V_g^S = c_1U_g^S$ and $V_h^S = c_2 U_h^S$.
Then we have
\begin{align}
	V_{gh^{-1}}^S &= V_g^S V_h^{S\dag} \\
	&= c_1 \overline{c_2} U^S_g U^{S\dag}_h
	\\&= c_1 \overline{c_2} U^S_{gh^{-1}},
\end{align}
which implies $gh^{-1} \in H$.
Therefore, $H$ is a subgroup of $G$.
Moreover, from the construction of $W$ we have $W \subseteq H$.
Since $W$ is an open neighborhood of $e$, the connectedness of $G$ implies $G=H$ as in the proof of Theorem~\ref{thm:main1g}.
Thus, for every $g\in G$ there exists a scalar $c_g\in \cmplx$ such that 
\begin{equation}
	V_g^S = c_g U_g^S.  \label{eq:VcgU} 
\end{equation}

We establish the theorem by proving that $(c_g)_{g\in G}$ a $1$-dimensional continuous unitary representation of $G$.
Since for every $g\in G$ we have
\begin{equation}
	c_g \unit_S = V_g^S U_{g^{-1}}^S ,
	\label{eq:cgVU}
\end{equation}
the scalar $c_g$ is unique and $c_g \in \Uone$.
From the strong continuity of $\gin{U^S_g}$ and $\gin{V^S_g}$,
Eq.~\eqref{eq:cgVU} also implies that $G\ni g \mapsto c_g \in \Uone$ is continuous.
Thus, we only have to show that $G \ni g \mapsto c_g \in \Uone$ is a group homomorphism.
From $V_e^S =  U_e^S = \unit_S $ we have $c_e = 1$.
For every $g,h \in G$ we have
\begin{align}
	c_{gh^{-1}} U^S_{gh^{-1}} &= V^{S}_{gh^{-1}} \\
	&= V^S_g V^{S\dag }_h \\
	&= c_g \overline{c_h} U^S_g U^{S\dag }_h
	\\
	&= c_g \overline{c_h} U^S_{gh^{-1}} ,
\end{align}
which implies $c_{gh^{-1}} = c_g \overline{c_h}$.
Hence $G \ni g \mapsto c_g \in \Uone$ is a group homomorphism, which completes the proof.
\end{proof}

\section{A brief review of the WAY-type trade-off relations between the implementation error and the required resource in quantum processes}\label{sec:review}
In the main text, we have shown the no-go theorems for error-free implementations of projective measurements and unitary gates.
As briefly introduced in the Introduction, it is known that trade-off relations between implementation error and the resources required for implementation have been established when approximate implementations are allowed.
This field has developed actively in recent years, and results have also been obtained for general quantum gates, not limited to projective measurements and unitary gates.
For this reason, we give a brief review below. In the following review, the results are for finite-dimensional systems unless otherwise noted.

The studies of how conservation laws restrict quantum information processing with errors was started by M.\ M.\ Yanase \cite{PhysRev.123.666}.
In 1961, Yanase derived a trade-off inequality that shows that the size (dimension) of the implementation device is inversely proportional to the implementation error, in the same setup as the WAY theorem (Figure \ref{fig:2_supp}).
In 2002, M.\ Ozawa further developed this result in two directions.
In the first direction, Ozawa showed a trade-off inequality for the WAY-theorem setup with the Yanase condition \cite{OzawaWAY}. The inequality, called the WAY-Ozawa inequality, shows that the implementation error is inversely proportional to the variance of the conserved quantity $L_P$ in the implementation device $P$.
In the second direction, Ozawa showed that a very similar inequality to the WAY-Ozawa inequality holds for the implementation of the Controlled-NOT (C-NOT) gate \cite{ozawaWAY_CNOT}.
The inequality shows that in an arbitrary implementation of the C-NOT gate on 2qubit with the $Z$-spin conservation law (the computational basis is the eigenstates of the $Z$-spins), the implementation error of the C-NOT gate is inversely proportional to the variance of the conserved quantity ($Z$-direction spin in $P$) in the implementation device (Figure \ref{fig:1_supp}).

Ozawa's inequality for the C-NOT gate is very suggestive, since if the inequality could be extended to arbitrary unitary gates in general $d$-level systems, very similar trade-off relations could be established for the very different quantum information processing, i.e.\ projective measurements and unitary gates.
This problem has been open for a long time, but after similar results were shown for various limited unitary gates \cite{ozawa2003uncertainty,PhysRevA.75.032324,Karasawa_2009}, it was shown for arbitrary unitary gates in \cite{TSS}.
Ref.~\cite{TSS} also shows that the variance of the conserved quantity can be substituted by the quantum fluctuation of the conserved quantity (quantum Fisher information). Since the quantum Fisher information is the standard measure in the resource theory of asymmetry \cite{skew_resource,Takagi_skew,YT,Marvian_distillation,Hansen,kudo_fisher_2022} and related to the quantum fluctuation \cite{min_V_Yu,min_V_Petz,kudo_fisher_2022}, it was shown that the trade-off between the implementation error and the required resource is generally valid for unitary gates.
This is paired with extensions of the WAY-Ozawa inequality,  trade-off relations between the measurement error and the required resource (quantified by the quantum Fisher information) \cite{korzekwa2013,TN}.

After the results in Ref. \cite{TSS}, there were active developments regarding the trade-off between the implemenation error and the required resources in the implementation of unitary gates.
First, the minimum sufficient amount of required resources to achieve a given implementation error was given in the form of an asymptotic equality \cite{TSS2}.
Later, by similar methods to Ref. \cite{TSS}, the tradeoff relations between error and resource requirements in a broader class of resource theories were given \cite{TT,Yuxiang1}.
Recently, these results were further extended to infinite-dimensional systems \cite{Yuxiang2}.
All of these results provide in common inversely proportional relationships between resource requirements and errors in unitary gate implementations by using similar methods.

Furthermore, attempts have been made to unify the WAY-type restrictions given to various quantum information processing beyond the unitary gates and projective measurements.
In this direction, the unification of the WAY-type theorems for unitary gates, error correcting codes, and information recovery from black holes was done \cite{TS}. 
This result unifies the WAY-type theorems for unitary gates given in Refs.\cite{TSS,TSS2}, and the approximate Eastin-Knill theorem as the inversely proportional relation between decoding error and the number of code qubits for transversal codes.
After that, a similar unification result is also given for unitary gates and error-correcting codes \cite{liu_quantum_2022}.
And more recently, a theorem was given to unify more variants of the WAY-type restrictions \cite{arxiv.2206.11086}.
This result unifies the WAY-type theorems for projective measurement and unitary gates, and the approximate Eastin-Knill theorems as corollaries. In particular, for the measurement, a WAY-Ozawa type inequality was derived for the error defined by the fidelity error of the measurement output, rather than the error defined by the noise operator \cite{OzawaWAY} as before. Also, the result in \cite{arxiv.2206.11086} gives the trade-off relation between entropy production and the required coherence of the implementation of arbitrary Gibbs-preserving maps, and gives a general restriction on the classical/quantum information recovery from the black hole with the energy conservation law.

\begin{figure}
\centering
\includegraphics[width=6cm,clip]{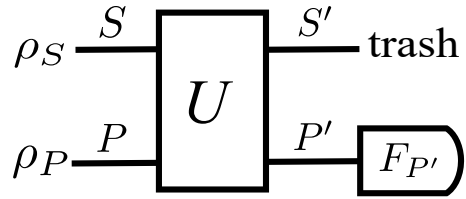}
\caption{A measurement model that implements a POVM $\sfE_S$. When we assume the Yanase condition, the measurement operators of $F_{P'}$ have to commute with the conserved quantity $L_{P'}$ on $P'$. 
When we try to implement a projective measurement $\calP$ approximately, the implementation errors are defined in two ways.
The first way is to define the error as the expectation value of the square of the \lq\lq{}noise operator,\rq\rq{} which is used in Refs.\cite{OzawaWAY,korzekwa2013,TN}.
The second way is to define the the gate-fidelity distance between the desired measurement $\calP$ and the actually implemented measurement channel  $\calM$, which is used in Ref.\cite{arxiv.2206.11086}.}
\label{fig:2_supp}
\end{figure}

\begin{figure} 
\centering
\includegraphics[width=6cm,clip]{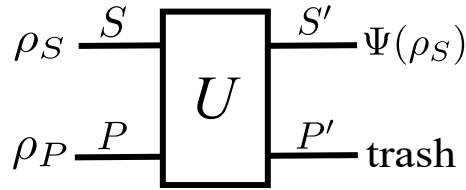}
\caption{A system-environment model that implements a channel $\Psi$. When we try to implement a unitary channel $\calU$ approximately, the implementation error is defined the gate-fidelity distance (or the entanglement gate-fidelity distance) between $\Psi$ and $\calU$.}
\label{fig:1_supp}
\end{figure}

\end{widetext}
\end{document}